\newtheorem{theorem}{Theorem}
\newtheorem{proposition}{Proposition}
\newtheorem{lemma}{Lemma}
\newtheorem{fact}{Fact}
\DeclareRobustCommand{\DE}[3]{#2}
\begin{document}

\title{SDDs are Exponentially More Succinct than OBDDs}

\author{Simone Bova\\Technische Universit{\"a}t Wien\\ \texttt{simone.bova@tuwien.ac.at}}

\date{}

\maketitle

\begin{abstract}
Introduced by Darwiche \cite{Darwiche11}, sentential decision diagrams (SDDs) 
are essentially as tractable as ordered binary decision diagrams (OBDDs), 
but tend to be more succinct \emph{in practice}.  This makes SDDs a prominent representation language, 
with many applications in artificial intelligence and knowledge compilation.

We prove that SDDs are more succinct than OBDDs  
also \emph{in theory}, 
by constructing a family of boolean functions 
where each member has polynomial SDD size 
but exponential OBDD size.  This 
exponential separation 
improves a quasipolynomial separation recently established by Razgon \cite{Razgon13}, 
and settles an open problem in knowledge compilation \cite{Darwiche11}.
\end{abstract}

\section{Introduction}

The idea of \emph{knowledge compilation}  
is to deal with the intractability 
of certain computational tasks 
on a knowledge base by compiling it 
into a different data structure  
where the tasks are feasible.  
The choice of the target data structure  
involves an unavoidable trade-off between succinctness and tractability.

Darwiche and Marquis \cite{DarwicheM02} systematically investigated 
this trade-off in the fundamental case where the knowledge bases are boolean functions 
and the data structures are classes of boolean circuits (\emph{representation languages}).  

In their setting, \emph{decomposable negation normal forms (DNNFs)} 
and \emph{ordered binary decision diagrams (OBDDs)} 
arise as benchmark  
languages for succinctness and tractability respectively \cite{Darwiche01,DarwicheM02}. 
On the one hand, DNNFs are exponentially more succinct than OBDDs; 
moreover, in contrast to OBDDs, they implement efficiently conjunctive normal forms 
of small treewidth \cite{Darwiche01,RazgonP13,OztokD14,Razgon14}.  On the other hand, 
the vast applicability of OBDDs in verification and synthesis 
relies on the tractability of equivalence testing (speeded up by canonicity) 
and boolean combinations, which DNNFs lack \cite{DarwicheM02}.

This gap between DNNFs (succinct but hard) and OBDDs (verbose but tractable) 
led to the quest for intermediate languages 
exponentially more succinct than, 
but essentially as tractable as, OBDDs.  

Introduced by Darwiche \cite{Darwiche11}, 
\emph{sentential decision diagrams (SDDs)} 
are a most prominent candidate 
to narrow the gap between DNNFs and OBDDs.  They are designed 
by strengthening the decomposability property \cite{PipatsrisawatD08}
and further imposing a very strong form of determinism \cite{PipatsrisawatD10}.  The resulting language 
can implement decisions of the form
\begin{equation}\label{eq:sent-dec-intro}
\bigvee_{i=1}^m P_i(X) \wedge S_i(Y)\text{,} 
\end{equation}
where $X$ and $Y$ are disjoint sets of variables nicely structured by an underlying 
\emph{variable tree}, 
and the subcircuits $P_1,\ldots,P_m$, called \emph{primes},\footnote{The $S_i$'s are called \emph{subs}.} implement an exhaustive case distinction 
into exclusive and consistent cases.\footnote{Formally, 
the models of $P_1,\ldots,P_m$ partition the set of assignments of $X$ to $\{0,1\}$ into $m$ nonempty blocks; see Section~\ref{sect:back}.} 
Binary (or Shannon) decisions in OBDDs boil down to very special sentential decisions having the form
$$(\neg x \wedge S_1(Y)) \vee (x \wedge S_2(Y))\text{,}$$
where the variable $x$ is not in the variable set $Y$. 

Indeed, SDDs properly contain OBDDs, 
and hence are at least as succinct as OBDDs, 
while preserving tractability of all key tasks that are tractable on OBDDs.  
For this reason, they have been used in a variety 
of applications in artificial intelligence and probabilistic reasoning, 
as reported, for instance, by \cite{BroeckDarwiche15,OztokDarwiche15}.

Not only SDDs are as tractable as OBDDs, 
but they also tend to be more succinct than OBDDs in practice; 
in fact, knowledge compilers often produce much smaller SDDs than OBDDs by heuristically 
leveraging the additional flexibility of variable trees in SDDs 
with respect to variable orderings in OBDDs \cite{ChoiD13,OztokDarwiche15}. 

Nonetheless,  
the basic theoretical question about the relative succinctness of OBDDs and SDDs 
has been open since Darwiche introduced SDDs \cite{Darwiche11,Razgon13}:
\begin{quote}
\textit{Are SDDs exponentially more succinct than OBDDs?} 
\end{quote}
The results in the literature did not even exclude the possibility for OBDDs to polynomially simulate SDDs \cite{XueCD12}, 
until recently Razgon proved a quasipolynomial separation \cite{Razgon13}.  The above question stands, 
though, as for instance OBDDs could still quasipolynomially simulate SDDs.

\paragraph{Contribution.}   We prove in this article that 
SDDs are exponentially more succinct than OBDDs.  Thus, in particular, 
OBDDs cannot quasipolynomially simulate SDDs.  

More precisely, 
\emph{we construct an infinite family of boolean functions such that 
every member of the family 
has polynomial compressed SDD size but exponential OBDD size} (Theorem~\ref{th:exp-sep}).

\emph{Compressed} SDDs contain OBDDs,\footnote{More precisely, compressed SDDs contain reduced OBDDs; 
see \cite[Definition~1.3.2]{Wegener00}.} and are regarded as a natural SDD class because of their \emph{canonicity}: two compressed SDDs computing the same function 
are syntactically equal up to syntactic manipulations preserving polynomial size \cite{Darwiche11}.  
The restriction to compressed SDDs makes our result stronger, 
because general SDDs are believed (despite not known) to be exponentially more succinct than compressed SDDs \cite{BroeckDarwiche15}.

We separate compressed SDDs and OBDDs by a function, 
which we call the \emph{generalized hidden weighted bit} function because, 
indeed, it contains the \emph{hidden weighted bit} function (HWB) as a subfunction.  
HWB is perhaps the simplest function known to be hard on OBDDs \cite{Bryant86}:
it computes the subsets of $\{1,\ldots,n\}$ 
having size $i$ and containing the number $i$, 
for $i=1,\ldots,n$.

It turns out that HWB itself has small (uncompressed) SDDs (Theorem~\ref{th:sdd-size}), 
which immediately separates SDDs and OBDDs.  The construction, 
a slight variation of which gives the compressed case (Lemma~\ref{lemma:sdd-size} and Lemma~\ref{lemma:obdd-size}), 
is based on the following two observations.  

The first observation is that HWB 
can be expressed 
as a sentential decision of the form (\ref{eq:sent-dec-intro}) by distinguishing the following primes:
\begin{itemize}
\item for $i=1,\ldots,n$, the subsets of size $i$ containing the number $i$ 
(each of these $n$ primes is taken by HWB, so their subs will be equivalent to $\top$);
\item the empty subset, 
and the subsets of size $i$ not containing the number $i$ for $i=1,\ldots,n-1$ (none of these $n$ primes is taken by HWB, 
so their subs will be equivalent to $\bot$).
\end{itemize}
The second observation is that each of the above primes has small OBDD size under any variable ordering (Proposition~\ref{prop:obdd-pi}).  With these two observations 
it is fairly straightforward to implement the hidden weighted bit function 
by a small (uncompressed) SDD (Theorem~\ref{th:sdd-size}).  

A direct inspection of our construction allows to straightforwardly derive 
some facts about compression previously observed in the literature \cite{BroeckDarwiche15}, 
namely that the SDD size may increase exponentially either by compressing SDDs over fixed variable trees, 
or by conditioning (unboundedly many variables) over fixed variable trees (see Section~\ref{sect:disc}).

\paragraph{Organization.} The article is organized as follows.  In Section~\ref{sect:back} 
we present the technical background, 
culminating in the quasipolynomial separation of SDDs and OBDDs proved by Razgon 
(Theorem~\ref{th:raz}). In Section~\ref{sect:main}, 
we separate (uncompressed) SDDs and OBDDs by the hidden weighted bit function (Theorem~\ref{th:sdd-size}) 
and then modify the construction to separate compressed SDDs and OBDDs (Theorem~\ref{th:exp-sep}).  
We discuss our results in Section~\ref{sect:disc}. 

\section{Background}\label{sect:back}

We collect background notions and facts from the literature \cite{DarwicheM02,PipatsrisawatD08,Darwiche11,Razgon13}.

\paragraph{Structured Deterministic NNFs.} 
Let $X$ be a finite set of variables.  
Let $C$ be a boolean circuit on input variables $X$, 
built using fanin $0$ constant gates (labelled by $\bot$ or $\top$), 
fanin $1$ negation gates (labelled by $\neg$), 
and unbounded fanin disjunction and conjunction gates (labelled by $\vee$ and $\wedge$).  
The unique sink node (outdegree $0$) 
in the underlying directed acyclic graph (DAG) of $C$ 
is called the output gate of $C$; 
source nodes (indegree $0$) are called input gates, 
and are labelled by constants or variables in $X$; in particular, 
$C$ is allowed to not read some of the variables in $X$, see Figure~\ref{fig:struct} (left).

A boolean circuit $C$ on variables $X$ is in \emph{negation normal form}, in short an \emph{NNF}, 
if the gates labelled by $\neg$ have wires only from input gates.  Without loss of generality 
we assume that NNFs have input gates labelled by constants or literals on variables in $X$ 
(and no internal gates labelled by $\neg$).

As usual, an NNF $C$ on input variables $X$ computes a boolean function $f \colon \{0,1\}^X \to \{0,1\}$; 
in this case we also write $C \equiv f$.  Two NNFs $C$ and $C'$ on the same input variables are equivalent 
if they compute the same boolean function; again we write $C \equiv C'$.  

The \emph{size} of an NNF $C$, in symbols $\mathrm{size}(C)$, 
is the number of arcs in its underlying DAG.  
Let $f$ be a boolean function 
and let $\mathcal{L}$ be a class of NNFs.  
The \emph{size of $f$ relative to $\mathcal{L}$} 
(or, in short, the \emph{$\mathcal{L}$ size of $f$}), 
denoted by $\mathcal{L}(f)$, 
is equal to the minimum over the sizes of all circuits in $\mathcal{L}$ computing $f$:
$$\mathcal{L}(f)=\mathrm{min}\{ \mathrm{size}(C) \colon C \in \mathcal{L}, C \equiv f\}\text{.}$$

Let $C$ be an NNF on input variables $X$, 
and let $g$ be a gate of $C$.  
We denote by $C_{g}$ the subcircuit of $C$ having $g$ as its output gate, 
that is, the circuit whose underlying DAG is 
the subgraph of the underlying DAG of $C$ 
induced by the nodes having a directed path to $g$ (labelled as in $C$).

\smallskip 

An NNF $C$ on input variables $X$ is \emph{deterministic} 
if, for every $\vee$-gate $g$ in $C$, say of the form $\bigvee_{i=1}^m g_i$, 
it holds that $$C_{g_i} \wedge C_{g_j} \equiv \bot$$ for all $1 \leq i < j \leq m$, 
where we formally regard $C_{g_i}$, $C_{g_j}$, and $\bot$ as NNFs on input variables $X$.
We denote by $\mathcal{NNF}_d$ the class of all deterministic NNFs.

\smallskip 

Let $Y$ be a finite nonempty set of variables.  
A \emph{variable tree} (in short, a \emph{vtree}) for the variable set $Y$ is a 
rooted, full, ordered, binary tree $T$ 
whose leaves correspond bijectively to $Y$; indeed, 
we identify each leaf in $T$ with the variable in $Y$ it corresponds to.  

Let $v$ be an internal node of the vtree $T$.  We let
$v_l$ and $v_r$ denote respectively the left and right child of $v$, 
and $T_{v}$ denote the subtree of $T$ rooted at $v$.  
We also let $Y_v \subseteq Y$ denote (the variables corresponding to) the leaves 
of $T_v$; clearly $T_v$ is a vtree for the variable set $Y_v$.

\smallskip   

Let $C$ be an NNF on input variables $X$, 
and let $T$ be a vtree for the variable set $Y$.  

We say that $C$ \emph{respects $T$} if the following holds.  
First, every $\wedge$-gate $g$ in $C$ has fanin exactly $2$.  
Second, let $g$ be an $\wedge$-gate in $C$ having wires from gates $h_1$ and $h_2$. 
Then there exists an internal node $v$ in $T$ such that 
the input gates of the subcircuit $C_{h_1}$ 
mention only variables in $T_{v_l}$ 
and the input gates of the subcircuit $C_{h_2}$ 
mention only variables in $T_{v_r}$.  In this case, we also say that $g$ respects $v$.

Note that, in particular, 
the sets of variables mentioned by $C_{h_1}$ 
and $C_{h_2}$ are disjoint; it follows that $C$ is decomposable \cite{Darwiche01}.  
Also note that, by definition, if an NNF reading all the variables in a set $X$
is structured by a vtree for the variable set $Y$, 
then $X \subseteq Y$ and the inclusion can be strict; 
see Figure~\ref{fig:struct}.  This feature is crucial in our construction 
(see, for instance, the proof of Theorem~\ref{th:sdd-size}).

A \emph{structured NNF} is an NNF respecting some vtree.  See Figure~\ref{fig:struct}.  
We denote by $\mathcal{NNF}_s$ the class of all structured NNFs.  

\begin{figure}[t!]
\centering
\begin{picture}(0,0)%
\includegraphics{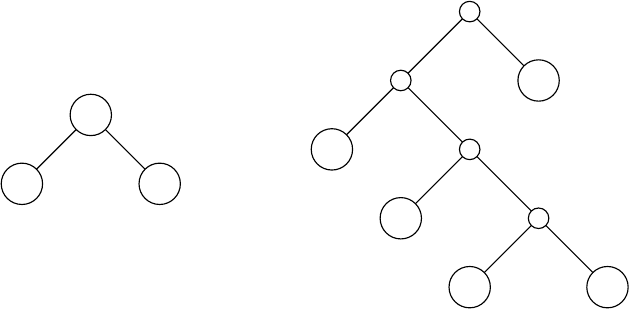}%
\end{picture}%
\setlength{\unitlength}{2901sp}%
\begingroup\makeatletter\ifx\SetFigFont\undefined%
\gdef\SetFigFont#1#2#3#4#5{%
  \reset@font\fontsize{#1}{#2pt}%
  \fontfamily{#3}\fontseries{#4}\fontshape{#5}%
  \selectfont}%
\fi\endgroup%
\begin{picture}(4111,2018)(8858,-429)
\put(9805,344){\makebox(0,0)[lb]{\smash{{\SetFigFont{8}{9.6}{\rmdefault}{\mddefault}{\updefault}{\color[rgb]{0,0,0}$x_2$}%
}}}}
\put(12323,1019){\makebox(0,0)[lb]{\smash{{\SetFigFont{8}{9.6}{\rmdefault}{\mddefault}{\updefault}{\color[rgb]{0,0,0}$y$}%
}}}}
\put(11830,-331){\makebox(0,0)[lb]{\smash{{\SetFigFont{8}{9.6}{\rmdefault}{\mddefault}{\updefault}{\color[rgb]{0,0,0}$x_3$}%
}}}}
\put(9384,793){\makebox(0,0)[lb]{\smash{{\SetFigFont{8}{9.6}{\rmdefault}{\mddefault}{\updefault}{\color[rgb]{0,0,0}$\wedge$}%
}}}}
\put(8922,333){\makebox(0,0)[lb]{\smash{{\SetFigFont{8}{9.6}{\rmdefault}{\mddefault}{\updefault}{\color[rgb]{0,0,0}$\bot$}%
}}}}
\put(10930,569){\makebox(0,0)[lb]{\smash{{\SetFigFont{8}{9.6}{\rmdefault}{\mddefault}{\updefault}{\color[rgb]{0,0,0}$x_1$}%
}}}}
\put(11380,119){\makebox(0,0)[lb]{\smash{{\SetFigFont{8}{9.6}{\rmdefault}{\mddefault}{\updefault}{\color[rgb]{0,0,0}$x_2$}%
}}}}
\put(12730,-331){\makebox(0,0)[lb]{\smash{{\SetFigFont{8}{9.6}{\rmdefault}{\mddefault}{\updefault}{\color[rgb]{0,0,0}$x_4$}%
}}}}
\end{picture}%
\caption{A circuit on input variables $\{x_2,x_4\}$ on the left (in the underlying DAG, 
the edges are oriented upwards), respecting the vtree for the variable set $\{x_1,x_2,x_3,x_4,y\}$ on the right.  
The left subtree is a vtree for the variable set $\{x_1,x_2,x_3,x_4\}$, 
and the right subtree is a vtree for the variable set $\{y\}$.  
The $\wedge$-gate in the circuit respects the root of the vtree.}
\label{fig:struct}
\end{figure}

\paragraph{SDDs and OBDDs.}  
A \emph{sentential decision diagram (SDD) $C$ respecting a vtree $T$}   
is defined inductively as follows.
\begin{itemize}
\item $C$ is a single gate labelled by a literal on a variable $x$, 
and $x$ is in the variable set of $T$.
\item $C$ is a single gate labelled by a constant, and $T$ is any vtree.
\item $C$ is formed by an output gate $g$ labelled by $\vee$, 
with $m \geq 2$ wires from gates $g_1,\ldots,g_m$ labelled by $\wedge$, 
where each $g_i$ has wires from two gates $p_i$ and $s_i$, that is,
\begin{equation}\label{eq:or-node}
C=\bigvee_{i=1}^m C_{p_i} \wedge C_{s_i}\text{,} 
\end{equation}
such that for some internal node $v$ of $T$ the following holds ($i=1,\ldots,m$):
\begin{description}
\item[{\bf(S1)}] $C_{p_i}$ is an SDD respecting a subtree of $T_{v_l}$.
\item[{\bf(S2)}]  $C_{s_i}$ is an SDD respecting a subtree of $T_{v_r}$.
\item[{\bf(S3)}] $C_{p_i} \not\equiv \bot$.
\item[{\bf(S4)}] $C_{p_i} \wedge C_{p_j} \equiv \bot$ ($1 \leq i<j \leq m$).
\item[{\bf(S5)}] $\bigvee_{i=1}^m C_{p_i} \equiv \top$. 
\end{description}
\end{itemize}
In the equivalences in (S3)-(S5), we formally regard the $C_{p_i}$'s, $\bot$ and $\top$ 
as NNFs on variables $Y_{v_l}$.  In words, conditions (S3)-(S5) say that 
the $C_{p_i}$'s define a partition of $\{0,1\}^{Y_{v_l}}$ 
into $m$ nonempty blocks, where the $i$th block contains exactly 
the models of $C_{p_i}$ ($i=1,\ldots,m$).

An SDD is an SDD respecting some vtree.  We let $\mathcal{SDD}$ 
denote the class of all SDDs.

An SDD $C$ is called \emph{compressed} if the following holds.  
Let $h$ be an $\vee$-gate of $C$, 
so that $h=\bigvee_{i=1}^{m'} C_{p'_i} \wedge C_{s'_i}$ specified as in 
(\ref{eq:or-node}) relative to some node $v'$ in $T$. Then
\begin{description}
\item[{\bf(C)}] $C_{s'_i} \not\equiv C_{s'_j}$ ($1 \leq i<j \leq m'$),
\end{description}
where we formally regard $C_{s'_i}$ as an NNF on variables $Y_{v'_r}$ for $i=1,\ldots,m'$.  
We let $\mathcal{SDD}_c$ denote the class of all compressed SDDs.

\smallskip
An \emph{ordered binary decision diagram (OBDD)} is a compressed SDD 
respecting a \emph{right-linear} vtree $T$ (that is, 
where each left child is a leaf); see Figure~\ref{fig:right-linear-tree}.  
We let $\mathcal{OBDD}$ denote the class of all OBDDs.\footnote{Reduced OBDDs as usually defined in the literature \cite[Definition~1.3.2]{Wegener00} 
are indeed compressed SDD respecting right-linear vtrees \cite[Section~6]{Darwiche11}.}

\begin{figure}[t!]
\centering
\begin{picture}(0,0)%
\includegraphics{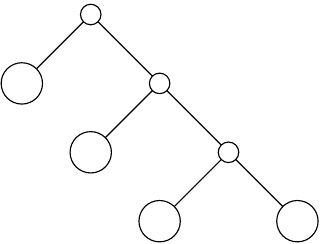}%
\end{picture}%
\setlength{\unitlength}{2901sp}%
\begingroup\makeatletter\ifx\SetFigFont\undefined%
\gdef\SetFigFont#1#2#3#4#5{%
  \reset@font\fontsize{#1}{#2pt}%
  \fontfamily{#3}\fontseries{#4}\fontshape{#5}%
  \selectfont}%
\fi\endgroup%
\begin{picture}(2086,1568)(10883,-429)
\put(11830,-331){\makebox(0,0)[lb]{\smash{{\SetFigFont{8}{9.6}{\rmdefault}{\mddefault}{\updefault}{\color[rgb]{0,0,0}$x_3$}%
}}}}
\put(10930,569){\makebox(0,0)[lb]{\smash{{\SetFigFont{8}{9.6}{\rmdefault}{\mddefault}{\updefault}{\color[rgb]{0,0,0}$x_1$}%
}}}}
\put(11380,119){\makebox(0,0)[lb]{\smash{{\SetFigFont{8}{9.6}{\rmdefault}{\mddefault}{\updefault}{\color[rgb]{0,0,0}$x_2$}%
}}}}
\put(12730,-331){\makebox(0,0)[lb]{\smash{{\SetFigFont{8}{9.6}{\rmdefault}{\mddefault}{\updefault}{\color[rgb]{0,0,0}$x_4$}%
}}}}
\end{picture}%

\caption{A right-linear vtree; its left first traversal induces the variable ordering $x_1<x_2<x_3<x_4$.}
\label{fig:right-linear-tree}
\end{figure}

Let $C$ be an OBDD respecting a vtree $T$, 
and let $\sigma=x_1<\cdots<x_n$ be the variable ordering 
induced by a left first traversal of $T$; in this case, 
we also say that $C$ respects $\sigma$.  
For an ordering $\sigma$ of a set of variables, 
we let $\mathcal{OBDD}_\sigma$ denote the class of all OBDDs respecting $\sigma$.

\paragraph{Quasipolynomial Separation.} It follows from the definitions that 
\begin{equation}\label{eq:inclusions}
\mathcal{OBDD} \subseteq \mathcal{SDD}_c \subseteq \mathcal{SDD} \subseteq \mathcal{NNF}_{s} \cap \mathcal{NNF}_{d}\text{} 
\end{equation}
which raises the natural question how OBDDs and SDDs 
are related in succinctness; indeed, the quest for the relative succinctness of 
OBDDs and SDDs has been an open problem in knowledge compilation since 
Darwiche introduced SDDs \cite{Darwiche11}.  

Recently, Razgon \cite[Corollary~3]{Razgon13} has  
established a \emph{quasipolynomial separation} of OBDDs from compressed SDDs.

\begin{theorem}[Razgon]\label{th:raz}
There exists an unbounded arity class of boolean functions $\mathcal{F}$ 
such that every arity $n$ function $f \in \mathcal{F}$ has $\mathcal{SDD}_c$ size in $O(n^3)$ 
and $\mathcal{OBDD}$ size in $n^{\Omega(\log n)}$.
\end{theorem}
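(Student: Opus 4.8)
The plan is to realise $\mathcal{F}$ as a family of CNF formulas of bounded treewidth whose clause structure is built on a balanced binary tree: for $n$ a power of two, index $O(n)$ variables by the nodes of a complete binary tree $T_n$ of depth $\log n$, and let $\phi_n$ be a fixed-width CNF in which every clause lives inside a single bag of a width-$O(1)$ tree decomposition of $T_n$, with a routing/selection gadget at each internal node coupling its variables with those of its two children. By design the incidence graph of $\phi_n$ has treewidth $O(1)$, which will give the $\mathcal{SDD}_c$ upper bound, while its pathwidth is $\Theta(\log n)$, which is what makes it hostile to OBDDs: since bounded \emph{pathwidth} alone would already force polynomial OBDD size, the genuinely tree-like (non-path-like) shape and the gadget forcing information to propagate across the tree are both essential.

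For the upper bound I would take a vtree $V_n$ mirroring $T_n$ and compile $\phi_n$ into a structured circuit by a bottom-up sweep over $V_n$: at a node $v$, the partial assignments of $Y_v$ consistent with the clauses entirely inside $T_v$ are organised as a sentential decision over the $O(1)$ interface variables of $v$ (those appearing in a clause that also mentions a variable outside $T_v$), whose primes are the constantly many interface patterns and whose subs are the recursively built subcircuits; conditions (S1)--(S5) are immediate from this layering, and sharing equal subcircuits enforces (C), so the result is a compressed SDD. A routine size accounting (the generic bound $n\cdot 2^{O(w)}$ for width-$w$ CNFs with $w=O(1)$, refined to charge for determinism and compression) then yields the $O(n^3)$ bound; alternatively one may invoke the known polynomial compilation of bounded-treewidth CNFs into structured DNNFs and check that, for this particular $\phi_n$, determinism comes for free from the interface decomposition.

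The lower bound is where the work lies. Fix an arbitrary variable order $\sigma$ for $\phi_n$. Recalling that every OBDD respecting $\sigma$ has size at least the number of distinct subfunctions of $\phi_n$ obtained by assigning prefixes of $\sigma$, it suffices to lower-bound that count. The engine is a recursion on depth: no matter how $\sigma$ arranges the tree-indexed variables, some prefix of $\sigma$ splits a large family of sibling subtrees of $T_n$ in a balanced way, leaving on each side essentially independent copies of the depth-$(\log n - 1)$ instance; this yields a recurrence of the form $N(n) \ge n^{\Omega(1)}\cdot N(n/2)$, and unwinding it over the $\log n$ levels gives $N(n) = n^{\Omega(\log n)}$. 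Hence $\mathcal{OBDD}_\sigma(\phi_n) = n^{\Omega(\log n)}$ for every $\sigma$, and therefore $\mathcal{OBDD}(\phi_n) = n^{\Omega(\log n)}$.

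The main obstacle is precisely the ``for every $\sigma$'' part: one must show that an arbitrary linear order on the tree-indexed variables always admits a prefix cutting enough sibling subtrees to trigger the recursive blow-up, and that the subfunctions produced at the $\log n$ nested levels are genuinely distinct, so that the counts multiply up to a quasipolynomial rather than merely superpolynomial bound. This balanced-cut extraction and the associated bookkeeping are the delicate steps; by contrast the $\mathcal{SDD}_c$ upper bound, modulo care with determinism and compression, is essentially a direct induction along the vtree.
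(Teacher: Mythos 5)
A preliminary remark: the paper does not prove Theorem~\ref{th:raz} at all --- it is imported verbatim from Razgon \cite[Corollary~3]{Razgon13} as background for the main result. So your proposal has to be judged against Razgon's actual construction, not against anything in this article.

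The decisive problem is that the parameters you fix for $\phi_n$ make the lower bound you are aiming for provably false. You take the family to have treewidth $O(1)$ and pathwidth $\Theta(\log n)$. But a CNF whose (primal or incidence) graph has pathwidth $p$ admits an OBDD of size $n\cdot 2^{O(p)}$ under any variable ordering compatible with a path decomposition: after assigning a prefix, the residual function is determined by the values of the at most $p+1$ boundary variables occurring in crossing clauses. With $p=\Theta(\log n)$ this is $n^{O(1)}$, i.e.\ polynomial; equivalently, treewidth $k$ forces pathwidth $O(k\log n)$ and hence $\mathcal{OBDD}$ size $n^{O(k)}$, polynomial for constant $k$. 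So no bounded-treewidth family can have $\mathcal{OBDD}$ size $n^{\Omega(\log n)}$, and your recursive subfunction count must fail: across an interface of $O(1)$ variables a width-$O(1)$ CNF induces only $O(1)$ distinct residual states, so each level of your recurrence can contribute a constant factor, never the $n^{\Omega(1)}$ factor you need. What Razgon actually proves is a lower bound of $n^{\Omega(k)}$ for suitable tree-shaped CNFs of treewidth $k$, matching the $n^{O(k)}$ upper bound, and the quasipolynomial separation is obtained by instantiating $k=\Theta(\log n)$; the structured, deterministic, compressed compilation of width-$k$ CNFs has size $2^{O(k)}\cdot\mathrm{poly}(n)$, which remains polynomial (whence the $O(n^3)$) precisely because $k=O(\log n)$. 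Your high-level architecture --- tree-shaped instance, vtree mirroring the tree, subfunction counting against every order --- is in the right spirit, but it only has a chance once the width is allowed to grow logarithmically; indeed, the fact that treewidth-based CNF families can never beat $n^{\Theta(\log n)}$ is exactly why the present paper has to abandon them and use the hidden weighted bit function to obtain an exponential separation.
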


We remark that the restriction to compressed SDDs in the above statement is nontrivial; 
to the best of our knowledge, compressed SDDs 
might be exponentially more succinct than uncompressed SDDs \cite{BroeckDarwiche15}; 
see also the discussion in Section~\ref{sect:disc}.

\section{Exponential Separation}\label{sect:main}

The quasipolynomial separation stated in Theorem~\ref{th:raz}  
implies that OBDDs do not simulate SDDs in polynomial size, 
but leaves open the possibility for OBDDs to simulate SDDs in quasipolynomial size.  
In this section  
we exclude this possibility by establishing 
an \emph{exponential separation}
of OBDDs from compressed SDDs.

\paragraph{Hidden Weighted Bit.} The separation is obtained by (a variant of) the \emph{hidden weighted bit} function 
$$\mathrm{HWB}_n(x_1,\ldots,x_n)\text{,}$$ 
that is the boolean function on $n$ inputs $x_1,\ldots,x_n$ such that,  
for all assignments $f \colon \{x_1,\ldots,x_n\} \to \{0,1\}$, 
it holds that $f$ is a model of $\mathrm{HWB}_n$ if and only if $f(x_1)+\cdots+f(x_n)=i$ 
and $f(x_i)=1$ ($i \geq 1$).  

It is well known that the hidden weighted bit function has exponential OBDD size \cite{Bryant86}.  

\begin{theorem}[Bryant]\label{th:bryant}
The $\mathcal{OBDD}$ size of $\mathrm{HWB}_n$ is $2^{\Omega(n)}$. 
\end{theorem}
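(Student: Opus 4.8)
The plan is to exhibit, for a well-chosen variable ordering, an exponential lower bound on the number of distinct subfunctions obtained by fixing the first half of the variables, and then invoke the standard fact that the width of an OBDD at a given layer equals the number of distinct subfunctions induced by assignments to the variables read so far. Since the $\mathcal{OBDD}$ size of a function is the minimum over \emph{all} variable orderings of the size of the corresponding ordered BDD, I must show this communication-complexity-style bound holds for every ordering $\sigma$ of $\{x_1,\dots,x_n\}$ simultaneously.

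Concretely, I would fix an arbitrary ordering $\sigma$, split the variables into a first block $A$ of size $\lfloor n/2 \rfloor$ and a second block $B$ of the remaining variables, and consider the $|A| \times |B|$ communication matrix of $\mathrm{HWB}_n$ under this partition. The key combinatorial step is to find a large set of assignments $\alpha, \alpha'$ to $A$ that induce pairwise distinct subfunctions on $B$: for each such pair one needs an assignment $\beta$ to $B$ on which $\mathrm{HWB}_n(\alpha,\beta) \neq \mathrm{HWB}_n(\alpha',\beta)$. The natural choice is to let $\alpha$ range over assignments to $A$ of a \emph{fixed} Hamming weight $k \approx |A|/2$; given two such $\alpha \neq \alpha'$ with the same weight $k$, there is an index $i \in A$ with $\alpha(x_i) = 1$ and $\alpha'(x_i) = 0$, and one completes $\beta$ on $B$ so that the total weight becomes exactly $i$ — this makes $\mathrm{HWB}_n$ read bit $x_i$, which differs between $\alpha$ and $\alpha'$, so the outputs differ. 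One must check that $i \le n$ always admits such a completion, i.e. that $k \le i$ and $i - k \le |B|$; choosing $k$ near $|A|/2$ and $i$ near $n/2$ makes both slack conditions hold for $n$ large. The number of weight-$k$ assignments to $A$ is $\binom{|A|}{k} = \binom{\lfloor n/2\rfloor}{\lfloor n/4 \rfloor} = 2^{\Omega(n)}$ by Stirling, giving $2^{\Omega(n)}$ distinct subfunctions and hence an OBDD layer of width $2^{\Omega(n)}$, so $\mathcal{OBDD}_\sigma(\mathrm{HWB}_n) = 2^{\Omega(n)}$; since $\sigma$ was arbitrary, $\mathcal{OBDD}(\mathrm{HWB}_n) = 2^{\Omega(n)}$.

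The main obstacle — really the only place requiring care — is handling the quantification over all orderings uniformly. Because the fooling argument depends only on the \emph{set} $A$ of the first $\lfloor n/2\rfloor$ variables and not on their internal order, and because $\mathrm{HWB}_n$ is symmetric under permutations of its indices only up to the hidden dependence on which \emph{index} $i$ is selected, I have to make sure the completion argument picks an index $i$ that is realizable regardless of how $A$ partitions the index set $\{1,\dots,n\}$. The cleanest way around this is to note that we have complete freedom in choosing both $k$ (any value in $\{0,\dots,|A|\}$) and which coordinate $i$ to isolate (any coordinate where $\alpha$ and $\alpha'$ disagree in the appropriate direction); a short averaging or pigeonhole observation shows that among the many weight-$k$ assignments there is a sub-collection of size $2^{\Omega(n)}$ that are pairwise separable using indices $i$ lying in a fixed admissible window $[k,\, k+|B|]$, which is nonempty for the chosen $k$. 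I would present this as a self-contained lemma about the communication matrix of $\mathrm{HWB}_n$ and then quote the OBDD-width characterization to conclude; the rest is routine.
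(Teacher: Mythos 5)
The paper does not prove this statement at all: it is imported as Bryant's theorem from \cite{Bryant86} (see also \cite{BolligLSW99}) and used as a black box. So you are attempting to reprove a known hard lower bound, and your architecture --- the subfunction-counting characterization of OBDD width plus a separating-assignment argument across the cut after the first $\lfloor n/2\rfloor$ variables read --- is indeed the standard one. But the step you defer to ``a short averaging or pigeonhole observation'' is where the entire difficulty of Bryant's theorem lives, and the first-paragraph version of your argument is false as stated.

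Concretely: after fixing $\sigma$ and the block $A$ of the first $\lfloor n/2\rfloor$ variables, you take $\alpha\neq\alpha'$ of equal weight $k$, pick an index $i\in A$ where they disagree, and complete on $B$ to total weight $i$. But $i$ is not a free parameter --- it is forced to be an index at which the two assignments disagree --- and the completion exists only when $k\le i\le k+|B|$. Take the identity ordering, so $A=\{1,\ldots,\lfloor n/2\rfloor\}$, and $k\approx n/4$; let $\alpha,\alpha'$ have the same support except that $\alpha$ has a $1$ at index $1$ and a $0$ at index $2$ while $\alpha'$ is the reverse. Every completed assignment has total weight $w=k+|\beta|\ge k\ge 3$, and $\alpha,\alpha'$ agree on every $x_w$ with $w\ge 3$, so they induce \emph{identical} subfunctions on $B$; hence $\binom{|A|}{k}$ does not lower-bound the number of distinct subfunctions, and ``choosing $i$ near $n/2$'' is not available. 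Worse, for $A=\{1,\ldots,\lfloor n/4\rfloor\}\cup\{\lceil 3n/4\rceil+1,\ldots,n\}$ and $k\approx n/4$ the admissible set $A\cap[k,k+|B|]$ is essentially empty, so the method separates no pair at all for that $k$. Your second paragraph correctly identifies the fix --- choose $k$ as a function of $A$, and restrict to weight-$k$ assignments that agree outside $W=A\cap[k,k+|B|]$ and vary only inside $W$ with a balanced weight there --- but the assertion that for \emph{every} $A$ there exist such a $k$, an achievable outside-weight $c$, and a resulting family of size $\binom{|W(k)|}{k-c}=2^{\Omega(n)}$ is precisely the content of the theorem. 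It appears to be true (one can check the extremal placements of $A$ and argue by averaging over $k$ that some $k$ gives $|W(k)|=\Omega(n)$ together with $k-c$ a balanced fraction of $|W(k)|$), but it needs an actual proof; until that lemma is supplied, or the result is simply cited as the paper does, the argument has a genuine gap.
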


Intuitively, a model of $\mathrm{HWB}_n$ is a subsets of $\{1,\ldots,n\}$ 
of size $i$ containing the number $i$, for $i=1,\ldots,n$.  For instance, 
$\mathrm{HWB}_2(1,0)=1$, because the set $\{1\}$ has size $1$ and contains the number $1$, 
and $\mathrm{HWB}_2(0,1)=0$, because the set $\{2\}$ has size $1$ but does not contain the number $1$.

The simple but crucial observation underlying our construction is 
that the models of $\mathrm{HWB}_n$ can be decided arguing by cases, as follows:  
If $S$ is a subset of $\{1,\ldots,n\}$ of size $i$, 
then $S$ is a model of $\mathrm{HWB}_n$ if and only if $i \in S$ ($i=1,\ldots,n$).  
With this insight it is not hard to setup an exhaustive and exclusive 
case distinction equivalent to $\mathrm{HWB}_n$; 
the key observation is that each individual case in the distinction 
is computable by a small OBDD with respect to any variable ordering.

We formalize the above intuition.  For $i \in \{0,1,\ldots,n\}$, 
let $$E^i_n(x_1,\ldots,x_n)$$ be the boolean function 
on $n$ inputs $x_1,\ldots,x_n$ such that,  
for all assignments $f \colon \{x_1,\ldots,x_n\} \to \{0,1\}$, 
it holds that $f$ is a model of $E^i_n$ if and only if $f(x_1)+\cdots+f(x_n)=i$. Hence 
$E^i_n$ computes the subsets of $\{1,\ldots,n\}$ 
of size $i$ ($i \geq 0$).  Let now 
\begin{equation}\label{eq:family-P}
\mathcal{P}_n=\{P_0,P_n\} \cup \{P_{i,0},P_{i,1} \colon i=1,\ldots,n-1\} 
\end{equation}
be the family of $2n$ boolean functions, each over the variables $\{x_1,\ldots,x_n\}$, 
defined as follows:
\begin{itemize}
\item $P_{0} \equiv E^0_n$
\item $P_{n} \equiv E^n_n$
\end{itemize} 
and for $i=1,\ldots,n-1$ let 
\begin{itemize}
\item $P_{i,0} \equiv E^i_n \wedge \neg x_i$ 
\item $P_{i,1} \equiv E^i_n \wedge x_i$ 
\end{itemize}
See Figure~\ref{fig:ein-proj} for an illustration. 

Each function in $\mathcal{P}_n$ computes a family of subsets of $\{1,\ldots,n\}$.  
Namely, $P_0$ computes the empty subset, 
$P_n$ computes $\{1,\ldots,n\}$, 
$P_{i,0}$ computes the subsets of $\{1,\ldots,n\}$ of size $i$ not containing the number $i$, 
and $P_{i,1}$ computes the subsets of $\{1,\ldots,n\}$ of size $i$ containing the number $i$ ($i=1,\ldots,n-1$).  

It is readily observed that the members of 
$\mathcal{P}_n$ partition the powerset of $\{1,\ldots,n\}$ in nonempty blocks. 
Formally,

\begin{fact}\label{fact:part}
Let $\mathcal{P}_n$ be as in (\ref{eq:family-P}), 
and let $P,P' \in \mathcal{P}_n$ with $P \neq P'$.
\begin{itemize}
\item $P \not\equiv \bot$.
\item $P \wedge P' \equiv \bot$.
\item $\bigvee_{P \in \mathcal{P}_n}P \equiv \top$.
\end{itemize}
\end{fact}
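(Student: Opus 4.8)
The plan is to verify the three items by elementary reasoning on the subsets of $\{1,\ldots,n\}$ that the members of $\mathcal{P}_n$ compute, as described in the paragraph preceding the statement. Throughout I identify an assignment $f\colon\{x_1,\ldots,x_n\}\to\{0,1\}$ with the set $S_f=\{j : f(x_j)=1\}$, so that $f$ is a model of $E^i_n$ exactly when $|S_f|=i$.

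For nonemptiness I would exhibit one model of each member. The all-zero assignment models $P_0$ and the all-one assignment models $P_n$. For $1\le i\le n-1$, the set $\{1,\ldots,n\}\setminus\{i\}$ has $n-1\ge i$ elements, so it contains some $i$-element subset avoiding $i$, which yields a model of $P_{i,0}$; and $\{i\}$ can be completed to an $i$-element set using $i-1\le n-1$ further elements of $\{1,\ldots,n\}$, which yields a model of $P_{i,1}$. It is essential here that the index ranges over exactly $i\in\{1,\ldots,n-1\}$, which is precisely why the extreme sizes $0$ and $n$ are handled by the single functions $P_0$ and $P_n$ rather than by a pair (note, e.g., that $E^n_n\wedge\neg x_n\equiv\bot$).

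For pairwise disjointness I would split on whether the two functions concern the same size. Every member of $\mathcal{P}_n$ is either $E^k_n$ itself (for $k\in\{0,n\}$) or a conjunct $E^k_n\wedge\ell$ with $\ell\in\{x_k,\neg x_k\}$ (for $1\le k\le n-1$); call $k$ its \emph{size index}. If $P$ and $P'$ have distinct size indices $i\ne j$, then any model of $P\wedge P'$ would be a model of $E^i_n\wedge E^j_n$, which is unsatisfiable since no set has two sizes; hence $P\wedge P'\equiv\bot$. If $P$ and $P'$ have the same size index, then since $P\ne P'$ we must have $\{P,P'\}=\{P_{i,0},P_{i,1}\}$ for some $i$, and $P\wedge P'\equiv E^i_n\wedge\neg x_i\wedge x_i\equiv\bot$.

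For exhaustiveness, take an arbitrary assignment $f$ and set $k=f(x_1)+\cdots+f(x_n)\in\{0,\ldots,n\}$. If $k=0$ then $f$ models $P_0$; if $k=n$ then $f$ models $P_n$; and if $1\le k\le n-1$ then $f$ models $E^k_n$, so $f$ models $P_{k,0}$ when $f(x_k)=0$ and $P_{k,1}$ when $f(x_k)=1$. Hence every assignment models some member of $\mathcal{P}_n$, i.e., $\bigvee_{P\in\mathcal{P}_n}P\equiv\top$. None of these steps is a genuine obstacle; the only point that requires a little care is the bookkeeping at the boundary sizes $0$ and $n$, which is exactly what dictates the mildly asymmetric definition of $\mathcal{P}_n$.
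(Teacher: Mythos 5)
Your proof is correct and is exactly the elementary verification the paper has in mind: the paper states this Fact without proof, remarking only that it is ``readily observed'' that the members of $\mathcal{P}_n$ partition the powerset of $\{1,\ldots,n\}$ into nonempty blocks. Your case analysis on the size index, including the boundary cases $i=0$ and $i=n$, fills in precisely those omitted details.
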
   

We now establish the key property, 
that each member of $\mathcal{P}_n$ is computable by a small OBDD 
with respect to any variable ordering.  

First consider the functions $E^i_n$.  An OBDD computing $E^i_n$ with respect to the variable ordering $\sigma=x_1<\cdots<x_n$ 
is displayed in Figure~\ref{fig:ein-obdd} for the case $n=4$ and $i=2$.  Generalizing the construction, 
we have that an OBDD $C$ computing $E^i_n$ and respecting $\sigma$ 
has at most $1+2+\cdots+n=n(n+1)/2$ decision nodes, 
each contributing $6$ wires in the circuit; hence $C$ has size $O(n^2)$.  

Since $E^i_n$ is symmetric \cite[Definition~2.3.2 and Lemma~4.7.1]{Wegener00}, the following holds.

\begin{proposition}\label{prop:obdd-ein}
Let $\sigma$ be an ordering of $x_1,\ldots,x_n$.  
The $\mathcal{OBDD}_\sigma$ size of $E^i_n$ is $O(n^2)$. 
\end{proposition}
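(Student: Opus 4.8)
The plan is to exhibit an explicit OBDD for $E^i_n$ under the standard ordering, bound its size by $O(n^2)$, and then reduce the general case to this one using the fact that $E^i_n$ is symmetric.

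First I would dispose of the dependence on $\sigma$. For any ordering $\sigma = x_{\pi(1)} < \cdots < x_{\pi(n)}$, the function $E^i_n$ read in the order $\sigma$ is literally the same boolean function as $E^i_n$ read in the order $x_1 < \cdots < x_n$, since permuting the inputs of a symmetric function does not change it; hence an OBDD respecting $\sigma$ is obtained from an OBDD respecting $x_1 < \cdots < x_n$ just by renaming variables at the literal gates, with no change in the number of arcs. So it suffices to bound the $\mathcal{OBDD}_{x_1<\cdots<x_n}$ size of $E^i_n$.

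Next I would describe the ``counting'' OBDD for $E^i_n$ (the layered OBDD for a symmetric function, cf.\ Figure~\ref{fig:ein-obdd} and \cite{Wegener00}). Reading $x_1,\ldots,x_n$ in order, the only relevant information about $x_1,\ldots,x_k$ is the partial sum $s = f(x_1)+\cdots+f(x_k)$, and even this only matters when $s \le i$ and $s + (n-k) \ge i$: if $s > i$ or $s + (n-k) < i$ the value of $E^i_n$ is already forced to $0$, so all such partial sums collapse to one rejecting sink. Thus after reading $k$ variables there are at most $\min\{k,i\}+1 \le n$ live states, plus the two constant sinks; I place one decision node on $x_{k+1}$ for each live state at level $k = 0,1,\ldots,n-1$, whose $1$-branch increments the count and whose $0$-branch keeps it (redirecting to the rejecting sink whenever the above bounds fail), and at level $n$ I send the state $s=i$ to $\top$ and every other state to $\bot$. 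This gives $1 + \sum_{k=1}^{n-1}(\min\{k,i\}+1) + O(1) = O(n^2)$ decision nodes. In the compressed-SDD presentation of an OBDD, each decision node ``if $x$ then go to $C_{s_2}$ else go to $C_{s_1}$'' is $(\neg x \wedge C_{s_1}) \vee (x \wedge C_{s_2})$, a $\vee$-gate feeding two $\wedge$-gates of fanin $2$, contributing a constant ($6$) number of arcs; so the whole circuit has $O(n^2)$ arcs. Finally I would note that this OBDD may be taken reduced (merge equal subcircuits, delete redundant tests), which only decreases its size, so it is a genuine member of $\mathcal{OBDD}_{x_1<\cdots<x_n}$; combined with the first step this yields the claim for every $\sigma$.

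I do not expect a real obstacle here: the construction is the textbook layered OBDD for a symmetric function, and the size bound is an immediate counting of live states per level. The only point needing a word of care is that the statement quantifies over all orderings $\sigma$, which is handled at once by the symmetry of $E^i_n$ as in the cited results of \cite{Wegener00}.
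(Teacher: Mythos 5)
Your proposal is correct and follows essentially the same route as the paper: the layered ``counting'' OBDD for the symmetric function $E^i_n$ under the natural ordering, with $O(n^2)$ decision nodes each contributing a constant number of arcs, and an appeal to symmetry (as in \cite[Lemma~4.7.1]{Wegener00}) to dispose of the quantification over all orderings $\sigma$. Your version merely spells out the live-state bookkeeping that the paper leaves to Figure~\ref{fig:ein-obdd} and the citation.
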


\begin{figure}[t!]
\centering
\begin{picture}(0,0)%
\includegraphics{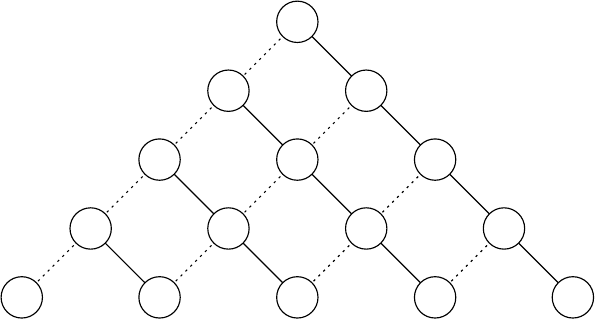}%
\end{picture}%
\setlength{\unitlength}{2901sp}%
\begingroup\makeatletter\ifx\SetFigFont\undefined%
\gdef\SetFigFont#1#2#3#4#5{%
  \reset@font\fontsize{#1}{#2pt}%
  \fontfamily{#3}\fontseries{#4}\fontshape{#5}%
  \selectfont}%
\fi\endgroup%
\begin{picture}(3886,2086)(9533,-879)
\put(11380,1019){\makebox(0,0)[lb]{\smash{{\SetFigFont{8}{9.6}{\rmdefault}{\mddefault}{\updefault}{\color[rgb]{0,0,0}$x_1$}%
}}}}
\put(10930,569){\makebox(0,0)[lb]{\smash{{\SetFigFont{8}{9.6}{\rmdefault}{\mddefault}{\updefault}{\color[rgb]{0,0,0}$x_2$}%
}}}}
\put(11830,569){\makebox(0,0)[lb]{\smash{{\SetFigFont{8}{9.6}{\rmdefault}{\mddefault}{\updefault}{\color[rgb]{0,0,0}$x_2$}%
}}}}
\put(11380,119){\makebox(0,0)[lb]{\smash{{\SetFigFont{8}{9.6}{\rmdefault}{\mddefault}{\updefault}{\color[rgb]{0,0,0}$x_3$}%
}}}}
\put(10480,119){\makebox(0,0)[lb]{\smash{{\SetFigFont{8}{9.6}{\rmdefault}{\mddefault}{\updefault}{\color[rgb]{0,0,0}$x_3$}%
}}}}
\put(12280,119){\makebox(0,0)[lb]{\smash{{\SetFigFont{8}{9.6}{\rmdefault}{\mddefault}{\updefault}{\color[rgb]{0,0,0}$x_3$}%
}}}}
\put(12730,-331){\makebox(0,0)[lb]{\smash{{\SetFigFont{8}{9.6}{\rmdefault}{\mddefault}{\updefault}{\color[rgb]{0,0,0}$x_4$}%
}}}}
\put(10030,-331){\makebox(0,0)[lb]{\smash{{\SetFigFont{8}{9.6}{\rmdefault}{\mddefault}{\updefault}{\color[rgb]{0,0,0}$x_4$}%
}}}}
\put(9600,-789){\makebox(0,0)[lb]{\smash{{\SetFigFont{8}{9.6}{\rmdefault}{\mddefault}{\updefault}{\color[rgb]{0,0,0}$\bot$}%
}}}}
\put(12300,-789){\makebox(0,0)[lb]{\smash{{\SetFigFont{8}{9.6}{\rmdefault}{\mddefault}{\updefault}{\color[rgb]{0,0,0}$\bot$}%
}}}}
\put(13200,-789){\makebox(0,0)[lb]{\smash{{\SetFigFont{8}{9.6}{\rmdefault}{\mddefault}{\updefault}{\color[rgb]{0,0,0}$\bot$}%
}}}}
\put(10930,-331){\makebox(0,0)[lb]{\smash{{\SetFigFont{8}{9.6}{\rmdefault}{\mddefault}{\updefault}{\color[rgb]{0,0,0}$x_4$}%
}}}}
\put(11830,-331){\makebox(0,0)[lb]{\smash{{\SetFigFont{8}{9.6}{\rmdefault}{\mddefault}{\updefault}{\color[rgb]{0,0,0}$x_4$}%
}}}}
\put(10500,-789){\makebox(0,0)[lb]{\smash{{\SetFigFont{8}{9.6}{\rmdefault}{\mddefault}{\updefault}{\color[rgb]{0,0,0}$\bot$}%
}}}}
\put(11400,-800){\makebox(0,0)[lb]{\smash{{\SetFigFont{8}{9.6}{\rmdefault}{\mddefault}{\updefault}{\color[rgb]{0,0,0}$\top$}%
}}}}
\put(10750,-288){\makebox(0,0)[lb]{\smash{{\SetFigFont{8}{9.6}{\rmdefault}{\mddefault}{\updefault}{\color[rgb]{0,0,0}$v$}%
}}}}
\end{picture}%

\caption{An OBDD for the boolean function $E^2_4$ respecting the variable ordering $x_1<x_2<x_3<x_4$, 
drawn (in an unreduced form) using the graphical conventions for decision diagrams \cite{Wegener00}.  Each decision node generates $6$ wires in the circuit; 
for instance, the decision node $v$ generates a $6$-wire subcircuit isomorphic to $(\neg x_4 \wedge \bot) \vee (x_4 \wedge \top)$.}
\label{fig:ein-obdd}
\end{figure}

It follows that every $P \in \mathcal{P}_n$ has a small OBDD 
with respect to every variable ordering.  

\begin{proposition}\label{prop:obdd-pi}
Let $\sigma$ be an ordering of $x_1,\ldots,x_n$ and let $P \in \mathcal{P}_n$, where $\mathcal{P}_n$ is as in (\ref{eq:family-P}).  
The $\mathcal{OBDD}_\sigma$ size of $P$ is $O(n^2)$. 
\end{proposition}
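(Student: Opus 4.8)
The plan is to reduce the claim about an arbitrary $P \in \mathcal{P}_n$ to the already-established Proposition~\ref{prop:obdd-ein}, which bounds the $\mathcal{OBDD}_\sigma$ size of $E^i_n$. Each $P \in \mathcal{P}_n$ is either one of the symmetric functions $E^0_n$ or $E^n_n$ — for which Proposition~\ref{prop:obdd-ein} applies directly — or one of the functions $P_{i,0} \equiv E^i_n \wedge \neg x_i$ or $P_{i,1} \equiv E^i_n \wedge x_i$ for some $i \in \{1,\ldots,n-1\}$. So the only real work is to bound the $\mathcal{OBDD}_\sigma$ size of $E^i_n \wedge \ell$, where $\ell$ is the literal $x_i$ or $\neg x_i$.

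First I would observe that conditioning an OBDD on a single literal does not increase its size, in the following sense: given an OBDD $C$ for $E^i_n$ respecting $\sigma$, one obtains an OBDD for $E^i_n \wedge \ell$ respecting $\sigma$ by redirecting, at every decision node testing the variable of $\ell$, the outgoing wire corresponding to the value falsifying $\ell$ to the constant $\bot$, and leaving everything else unchanged. The resulting diagram still respects $\sigma$ (no node was added, the variable order along any path is unchanged), still has fanin-$2$ $\wedge$-gates, and computes exactly $E^i_n$ on the branch where $\ell$ is satisfied and $\bot$ otherwise, i.e.\ it computes $E^i_n \wedge \ell$. Hence its size is at most $\mathrm{size}(C)$, and minimizing over $C$ gives $\mathcal{OBDD}_\sigma(E^i_n \wedge \ell) \le \mathcal{OBDD}_\sigma(E^i_n) = O(n^2)$ by Proposition~\ref{prop:obdd-ein}.

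Alternatively, and perhaps more cleanly given the way SDDs/OBDDs are formalized here as compressed structured NNFs, I would note that $P_{i,b}$ is itself a function depending only on $x_1,\ldots,x_n$ with a transparent "layered" OBDD: scan the variables in the order $\sigma$, maintaining as state the number of $1$'s seen so far (capped at $i+1$) together with, once $x_i$ has been read, one bit recording whether the value read at $x_i$ agreed with $b$; reject as soon as the count exceeds $i$ or the $x_i$-bit is wrong, accept iff the final count is exactly $i$ and the $x_i$-bit is correct. This automaton has $O(n)$ states per level and $n$ levels, hence $O(n^2)$ decision nodes, each contributing a constant number of wires, giving size $O(n^2)$; reducing it only shrinks it. Either route yields the bound; I would present the conditioning argument, as it is the shortest and directly leverages Proposition~\ref{prop:obdd-ein}.

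I do not anticipate a genuine obstacle here — the statement is essentially a bookkeeping corollary of Proposition~\ref{prop:obdd-ein}. The one point requiring a little care is making sure the "redirect a wire to $\bot$" operation produces an object that is still a legal OBDD in the paper's sense: in particular that it is still \emph{compressed} (after the redirection one may create an $\vee$-gate whose two subs coincide, or whose structure degenerates, so a final reduction pass — merging equivalent subs and eliminating redundant tests — is needed to land back in $\mathcal{OBDD}$), and that reduction does not increase size. This is routine and standard for OBDDs, so I would state it briefly and cite \cite{Wegener00} for the reduction operation rather than belabor it.
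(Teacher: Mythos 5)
Your proof is correct, and the overall decomposition is the same as the paper's: dispatch $P_0$ and $P_n$ directly via Proposition~\ref{prop:obdd-ein}, and reduce $P_{i,b}$ to bounding the $\mathcal{OBDD}_\sigma$ size of $E^i_n \wedge \ell$ for a literal $\ell$ on $x_i$. Where you differ is in the key step for that conjunction. The paper invokes the general product bound $\mathcal{OBDD}_{\rho}(f \wedge f') \leq \mathcal{OBDD}_{\rho}(f) \cdot \mathcal{OBDD}_{\rho}(f')$ from \cite[Theorem~3.3.6]{Wegener00}, treating $\ell$ as a constant-size ($6$-wire) OBDD, so the bound $O(n^2)\cdot O(1)=O(n^2)$ falls out with no construction at all. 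You instead give a direct, size-nonincreasing construction: redirect the falsifying branch at every $x_i$-node to $\bot$. That is more elementary and even gives the sharper conclusion $\mathcal{OBDD}_\sigma(E^i_n \wedge \ell) \leq \mathcal{OBDD}_\sigma(E^i_n)$, but it carries one small obligation the product bound avoids: you must ensure that every path of the minimal OBDD on which the function is not already $\bot$ actually tests $x_i$, since a path skipping $x_i$ would be left uncorrected. This does hold for $E^i_n$ (its non-constant subfunctions are symmetric in, and hence depend on, all remaining variables), and your second, ``layered automaton'' variant sidesteps the issue entirely by building an unreduced diagram that reads every variable on every path. Either route is fine; the paper's is shorter because it outsources the work to a cited theorem, yours is self-contained.
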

\begin{proof}
For $P_0$ and $P_n$ the statement follows directly from Proposition~\ref{prop:obdd-ein}.  
For $i=1,\ldots,n-1$ we have that $P_{i,0} \equiv E^i_n \wedge \neg x_i$ 
and $P_{i,1} \equiv E^i_n \wedge x_i$.  

Recall that if $f$ and $f'$ are boolean functions on $X$, 
and $\rho$ is any ordering of $X$, then \cite[Theorem~3.3.6]{Wegener00}:
\begin{equation}\label{eq:conj-quadr}
\mathcal{OBDD}_{\rho}(f \wedge f') \leq \mathcal{OBDD}_{\rho}(f) \cdot \mathcal{OBDD}_{\rho}(f')\text{.}
\end{equation}

Regarding the literals $\neg x_i$ 
and $x_i$ as boolean functions on $\{x_1,\ldots,x_n\}$ whose $\mathcal{OBDD}_\sigma$ size 
is constant ($6$ wires), 
the statement follows from (\ref{eq:conj-quadr}) and Proposition~\ref{prop:obdd-ein}.
\end{proof}

\begin{figure}[t!]
\centering
\begin{picture}(0,0)%
\includegraphics{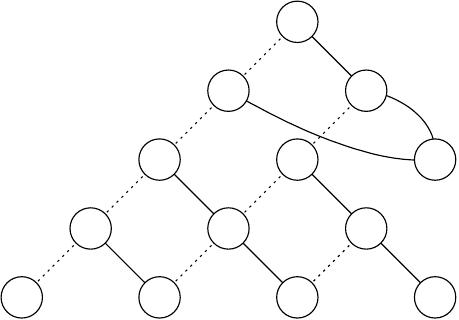}%
\end{picture}%
\setlength{\unitlength}{2901sp}%
\begingroup\makeatletter\ifx\SetFigFont\undefined%
\gdef\SetFigFont#1#2#3#4#5{%
  \reset@font\fontsize{#1}{#2pt}%
  \fontfamily{#3}\fontseries{#4}\fontshape{#5}%
  \selectfont}%
\fi\endgroup%
\begin{picture}(2986,2086)(9533,-879)
\put(11380,1019){\makebox(0,0)[lb]{\smash{{\SetFigFont{8}{9.6}{\rmdefault}{\mddefault}{\updefault}{\color[rgb]{0,0,0}$x_1$}%
}}}}
\put(10930,569){\makebox(0,0)[lb]{\smash{{\SetFigFont{8}{9.6}{\rmdefault}{\mddefault}{\updefault}{\color[rgb]{0,0,0}$x_2$}%
}}}}
\put(11830,569){\makebox(0,0)[lb]{\smash{{\SetFigFont{8}{9.6}{\rmdefault}{\mddefault}{\updefault}{\color[rgb]{0,0,0}$x_2$}%
}}}}
\put(11380,119){\makebox(0,0)[lb]{\smash{{\SetFigFont{8}{9.6}{\rmdefault}{\mddefault}{\updefault}{\color[rgb]{0,0,0}$x_3$}%
}}}}
\put(10480,119){\makebox(0,0)[lb]{\smash{{\SetFigFont{8}{9.6}{\rmdefault}{\mddefault}{\updefault}{\color[rgb]{0,0,0}$x_3$}%
}}}}
\put(10030,-331){\makebox(0,0)[lb]{\smash{{\SetFigFont{8}{9.6}{\rmdefault}{\mddefault}{\updefault}{\color[rgb]{0,0,0}$x_4$}%
}}}}
\put(9600,-789){\makebox(0,0)[lb]{\smash{{\SetFigFont{8}{9.6}{\rmdefault}{\mddefault}{\updefault}{\color[rgb]{0,0,0}$\bot$}%
}}}}
\put(12300,-789){\makebox(0,0)[lb]{\smash{{\SetFigFont{8}{9.6}{\rmdefault}{\mddefault}{\updefault}{\color[rgb]{0,0,0}$\bot$}%
}}}}
\put(12300,111){\makebox(0,0)[lb]{\smash{{\SetFigFont{8}{9.6}{\rmdefault}{\mddefault}{\updefault}{\color[rgb]{0,0,0}$\bot$}%
}}}}
\put(10930,-331){\makebox(0,0)[lb]{\smash{{\SetFigFont{8}{9.6}{\rmdefault}{\mddefault}{\updefault}{\color[rgb]{0,0,0}$x_4$}%
}}}}
\put(11830,-331){\makebox(0,0)[lb]{\smash{{\SetFigFont{8}{9.6}{\rmdefault}{\mddefault}{\updefault}{\color[rgb]{0,0,0}$x_4$}%
}}}}
\put(10500,-789){\makebox(0,0)[lb]{\smash{{\SetFigFont{8}{9.6}{\rmdefault}{\mddefault}{\updefault}{\color[rgb]{0,0,0}$\bot$}%
}}}}
\put(11400,-800){\makebox(0,0)[lb]{\smash{{\SetFigFont{8}{9.6}{\rmdefault}{\mddefault}{\updefault}{\color[rgb]{0,0,0}$\top$}%
}}}}
\end{picture}%

\caption{An OBDD for the boolean function $E^2_4 \wedge \neg x_2$ respecting the variable ordering $x_1<x_2<x_3<x_4$.}
\label{fig:ein-proj}
\end{figure}

\paragraph{SDDs vs OBDDs.}  We now prove that the hidden weighted bit function 
has small (uncompressed) SDD size; 
a slight modification of the construction, described later,   
gives the compressed case.

The key observation is that, by the definition of $\mathcal{P}_n$, 
the hidden weighted bit function $\mathrm{HWB}_n$ 
is equivalent to 
\begin{equation}\label{eq:sdd-hwb-form}
(P_0 \wedge \bot) \vee (P_n \wedge \top) \vee \bigvee_{i=1}^{n-1} ((P_{i,0} \wedge \bot) \vee (P_{i,1} \wedge \top))\text{} 
\end{equation}
because the latter is equivalent to 
$$(E^1_n \wedge x_1) \vee \cdots \vee (E^n_n \wedge x_n)$$
which is in turn equivalent to $\mathrm{HWB}_n$.  
Using the form (\ref{eq:sdd-hwb-form}), it is easy to build an SDD  
computing $\mathrm{HWB}_n$ and respecting a vtree for $\{x_1,\ldots,x_n,y\}$ 
like the one on the right in Figure~\ref{fig:struct};  
upon implementing the $P_i$'s and $P_{i,j}$'s by OBDDs, 
the construction has polynomial size by Proposition~\ref{prop:obdd-pi}.  Note that the SDD is not compressed 
because $\bot$ and $\top$ are reused $n$ times. The details follow.

\begin{theorem}\label{th:sdd-size}
The $\mathcal{SDD}$ size of $\mathrm{HWB}_n$ is $O(n^3)$.
\end{theorem}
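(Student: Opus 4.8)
The plan is to realize the equivalence (\ref{eq:sdd-hwb-form}) as a single sentential decision at the root of a suitable vtree, plugging in OBDDs for the primes. First I would fix a fresh variable $y \notin \{x_1,\ldots,x_n\}$ and a vtree $T$ for $\{x_1,\ldots,x_n,y\}$ whose root $v$ has as left subtree $T_{v_l}$ the (unique) right-linear vtree on $\{x_1,\ldots,x_n\}$ inducing the ordering $\sigma = x_1 < \cdots < x_n$, and as right subtree $T_{v_r}$ the single leaf $y$. The only role of $y$ is to provide a nonempty right subtree for the constant subs; the SDD we build will not read $y$, which is permitted for circuits structured by a vtree.

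Next, for each $P \in \mathcal{P}_n$ let $C_P$ be a minimum-size OBDD respecting $\sigma$ that computes $P$; by Proposition~\ref{prop:obdd-pi} we have $\mathrm{size}(C_P) = O(n^2)$. Since an OBDD respecting $\sigma$ is by definition a compressed SDD respecting a right-linear vtree inducing $\sigma$, and that vtree is exactly $T_{v_l}$, each $C_P$ is an SDD respecting $T_{v_l}$, hence respecting a subtree of $T_{v_l}$. I would then let $C$ be the circuit consisting of an output $\vee$-gate with $2n$ incoming wires from $\wedge$-gates, one gate $g_P$ per $P \in \mathcal{P}_n$, where $g_P$ has wires from $C_P$ and from a constant gate --- labelled $\top$ if $P \in \{P_n\} \cup \{P_{i,1} : 1 \le i \le n-1\}$ and $\bot$ otherwise --- reusing a single $\top$-gate and a single $\bot$-gate. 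By construction $C$ has exactly the form (\ref{eq:sdd-hwb-form}), and in particular $C$ is not compressed.

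Then I would check that $C$ is an SDD respecting $T$, with the root $\vee$-gate respecting $v$: (S1) holds since each prime $C_P$ respects $T_{v_l}$; (S2) holds since a constant gate respects any vtree, in particular any subtree of $T_{v_r}$; and (S3)--(S5) are precisely the three items of Fact~\ref{fact:part} applied to the primes $\{C_P : P \in \mathcal{P}_n\}$, which compute the functions in $\mathcal{P}_n$ (note $m = 2n \ge 2$). That $C$ computes $\mathrm{HWB}_n$ follows by simplifying (\ref{eq:sdd-hwb-form}) as indicated before the statement: the disjuncts whose sub is $\bot$ vanish, leaving $P_n \vee \bigvee_{i=1}^{n-1} P_{i,1} \equiv E^n_n \vee \bigvee_{i=1}^{n-1}(E^i_n \wedge x_i) \equiv \bigvee_{i=1}^{n}(E^i_n \wedge x_i)$, using $E^n_n \equiv E^n_n \wedge x_n$, and this last disjunction is $\mathrm{HWB}_n$ by the definitions of $E^i_n$ and $\mathrm{HWB}_n$.

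Finally, for the size bound, $\mathrm{size}(C)$ equals $\sum_{P \in \mathcal{P}_n}\mathrm{size}(C_P)$ plus the $O(n)$ arcs contributed by the $2n$ $\wedge$-gates, their wires to the two shared constant gates, and the $2n$ wires into the output $\vee$-gate; since $|\mathcal{P}_n| = 2n$ and each $\mathrm{size}(C_P) = O(n^2)$, this is $O(n^3)$. I do not expect a genuine obstacle here; the one point that demands care is the vtree bookkeeping --- confirming that an OBDD respecting $\sigma$ slots in cleanly as a prime respecting a subtree of $T_{v_l}$, and that leaving $y$ unread is legitimate --- after which (S1)--(S5) collapse to Fact~\ref{fact:part} and the size estimate is immediate from Proposition~\ref{prop:obdd-pi}.
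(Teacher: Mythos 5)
Your proposal is correct and follows essentially the same route as the paper: a single sentential decision at the root of a vtree whose left subtree is a right-linear vtree for $\{x_1,\ldots,x_n\}$ and whose right subtree is a dummy leaf $y$ not read by the circuit, with the $2n$ primes from $\mathcal{P}_n$ implemented as OBDDs of size $O(n^2)$ via Proposition~\ref{prop:obdd-pi} and (S3)--(S5) discharged by Fact~\ref{fact:part}. The only (harmless) differences are that you fix $\sigma = x_1 < \cdots < x_n$ where the paper allows any ordering, and you spell out the simplification of (\ref{eq:sdd-hwb-form}) to $\mathrm{HWB}_n$ in slightly more detail.
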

\begin{proof}
We first define an NNF $C$ on input variables $X=\{x_1,\ldots,x_n\}$ computing (\ref{eq:sdd-hwb-form}) 
as follows.  The output gate of $C$ is a fanin $2n$ $\vee$-gate, 
with wires from $2n$ fanin $2$ $\wedge$-gates $g_0$, $g_n$, and $g_{i,j}$ for $i=1,\ldots,n-1$ and $j=0,1$.  

Let $p_0$ and $s_0$ be the two gates wiring $g_0$, 
let $p_n$ and $s_n$ be the two gates wiring $g_n$, 
and for $i=1,\ldots,n-1$ and $j=0,1$ let $p_{i,j}$ and $s_{i,j}$ be the two gates wiring $g_{i,j}$.  

Let $\sigma$ be any ordering of $x_1,\ldots,x_n$.  All the subcircuits of $C$ 
rooted at $p_0$, $s_0$, $p_n$, $s_n$, $p_{i,j}$, and $s_{i,j}$ 
($i=1,\ldots,n-1$, $j=0,1$) are OBDDs respecting the ordering $\sigma$.  Moreover:
\begin{itemize}
\item $C_{p_i}$ computes $P_i$ for $i \in \{1,n\}$;
\item $C_{p_{i,j}}$ computes $P_{i,j}$ for $i=1,\ldots,n-1$, $j=0,1$;
\item $C_{s_0}$ and $C_{s_{i,0}}$ compute $\bot$ for $i=1,\ldots,n-1$;
\item $C_{s_n}$ and $C_{s_{i,1}}$ compute $\top$ for $i=1,\ldots,n-1$. 
\end{itemize}

We prove that $C$ is an SDD respecting a suitable vtree $T$ 
for the variable set $X \cup \{y\}$.  Roughly, $T$ is 
a right-linear vtree with the exception of the variable $y$; 
see the diagram on the right in Figure~\ref{fig:struct} for the case $n=4$ 
and $\sigma=x_1<x_2<x_3<x_4$.  Formally, $T$ is defined as follows. 
Let $v$ be the root of $T$.  
The left subtree $T_l=T_{v_l}$ of $T$ 
is a right-linear vtree for $\{x_1,\ldots,x_n\}$ such that 
the variable ordering induced by its left first traversal 
is $\sigma$.  Similarly, 
the right subtree $T_r=T_{v_r}$ of $T$ 
is a vtree for $\{y\}$.  

We check that $C$ is an SDD respecting $T$.
\begin{itemize}
\item The subcircuits $C_{p_0}$, $C_{p_n}$, and $C_{p_{i,j}}$ are OBDDs respecting $\sigma$, 
and hence SDDs respecting $T_l$ ($i=1,\ldots,n-1$, $j=0,1$).  This settles (S1).
\item The subcircuits $C_{s_0}$, $C_{s_n}$, and $C_{s_{i,j}}$ are 
input gates labelled by a constant, 
and hence SDDs respecting $T_r$ ($i=1,\ldots,n-1$, $j=0,1$).  This settles (S2).
\end{itemize}
Note how the construction crucially exploits the special position of $y$ in the vtree $T$, 
while the circuit $C$ does not even read $y$. 

The partitioning properties (S3)-(S5) follow by construction and Fact~\ref{fact:part}. 
Therefore, $C$ is an SDD respecting $T$.  It remains to check that 
$C$ has size cubic in $n$.  

By construction, $C$ contains the $2n$ subcircuits $C_{p_0}$, 
$C_{p_n}$, and $C_{p_{i,j}}$ for $i=1,\ldots,n-1$ and $j=0,1$; 
each has size $O(n^2)$ by Proposition~\ref{prop:obdd-pi} 
hence, altogether, they contribute $O(n^3)$ wires in $C$.  
There remain $O(n)$ wires entering the output gate 
and the gates $g_0,g_1,\ldots,g_m$.  
\end{proof}

Combining Theorem~\ref{th:bryant} and Theorem~\ref{th:sdd-size}, 
we conclude that OBDDs and SDDs are exponentially separated 
by the hidden weighted bit function.

\paragraph{Compressed SDDs vs OBDDs.} A slight variant 
of the previous construction gives an exponential separation 
of OBDDs and compressed SDDs.  

Let $y_0,y_1,\ldots,y_n$ be fresh variables.  The boolean function 
$F_n$ of the variables $x_1,\ldots,x_n,y_0,y_1,\ldots,y_n$, 
called \emph{generalized hidden weighted bit} function, is defined by
\begin{equation}\label{eq:Fn}
(P_0 \wedge \neg y_0) \vee (P_n \wedge y_n) \vee \bigvee_{i=1}^{n-1} ((P_{i,0} \wedge \neg y_i) \vee (P_{i,1} \wedge y_i))\text{.} 
\end{equation}

Notice that the form (\ref{eq:Fn}) is exactly as the form (\ref{eq:sdd-hwb-form}), 
except that the $n$ copies of $\bot$ and the $n$ copies of $\top$ 
are replaced by the $2n$ pairwise nonequivalent formulas $\neg y_0$, 
$y_n$, $y_i$, and $\neg y_i$ ($i=1,\ldots,n-1$), 
so that (\ref{eq:Fn}) has indeed a compressed SDD implementation.   The details follow.

\begin{lemma}\label{lemma:sdd-size}
The $\mathcal{SDD}_c$ size of $F_n$ is $O(n^3)$.
\end{lemma}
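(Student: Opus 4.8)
The plan is to mimic the proof of Theorem~\ref{th:sdd-size}, replacing the constant subs $\bot$ and $\top$ by the fresh literals so that condition (C) holds, while keeping the size bound cubic. First I would fix an arbitrary ordering $\sigma$ of $x_1,\ldots,x_n$ and build an NNF $C$ on input variables $\{x_1,\ldots,x_n,y_0,y_1,\ldots,y_n\}$ whose output gate is a fanin-$2n$ $\vee$-gate wiring the $2n$ fanin-$2$ $\wedge$-gates $g_0,g_n,g_{i,j}$ ($i=1,\ldots,n-1$, $j=0,1$) of the form (\ref{eq:Fn}): the prime sides $C_{p_0},C_{p_n},C_{p_{i,j}}$ are the small OBDDs respecting $\sigma$ computing $P_0,P_n,P_{i,j}$ from Proposition~\ref{prop:obdd-pi}, and the sub sides $C_{s_0},C_{s_n},C_{s_{i,j}}$ are single input gates labelled by the literals $\neg y_0$, $y_n$, $\neg y_i$, $y_i$ respectively.

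Next I would exhibit the vtree $T$ for the variable set $\{x_1,\ldots,x_n,y_0,\ldots,y_n\}$. The root $v$ has left subtree $T_{v_l}$ a right-linear vtree for $\{x_1,\ldots,x_n\}$ inducing $\sigma$, and right subtree $T_{v_r}$ an \emph{arbitrary} vtree for $\{y_0,y_1,\ldots,y_n\}$; each leaf $y_k$ in $T_{v_r}$ is a valid one-node SDD respecting a subtree of $T_{v_r}$, so (S1) and (S2) hold exactly as before, with all $\vee$-gates attached to the root $v$. The partitioning conditions (S3)--(S5) are unchanged from Theorem~\ref{th:sdd-size} since the prime functions are still the family $\mathcal{P}_n$, so Fact~\ref{fact:part} applies verbatim. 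The genuinely new point is condition (C): on the single $\vee$-gate of $C$, the subs are the $2n$ literals $\neg y_0,y_n,\neg y_i,y_i$, which are pairwise nonequivalent as NNFs on $Y_{v_r}=\{y_0,\ldots,y_n\}$ (distinct variables, and $\neg y_i\not\equiv y_i$), so $C$ is compressed. Finally the size count is identical to Theorem~\ref{th:sdd-size}: the $2n$ prime OBDDs contribute $O(n^3)$ wires by Proposition~\ref{prop:obdd-pi}, the $2n$ sub literals contribute $O(n)$ wires, and the output gate plus the $g$-gates add $O(n)$ more, for a total of $O(n^3)$.

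It also needs to be checked that $C$ actually computes $F_n$, i.e.\ that the NNF of the form (\ref{eq:Fn}) really defines the stated boolean function; but this is immediate from the definition (\ref{eq:Fn}) of $F_n$ together with the definitions of $P_0,P_n,P_{i,j}$, so no separate argument is required. I expect no real obstacle here: the only subtlety is making sure the $y$-variables are placed on the right of the vtree root and that every leaf $y_k$ counts as an SDD respecting \emph{a subtree of} $T_{v_r}$ (not $T_{v_r}$ itself), exactly the ``special position of $y$'' device used in Theorem~\ref{th:sdd-size}; the rest is bookkeeping identical to the uncompressed case. The proof can therefore be written as a short parallel to the proof of Theorem~\ref{th:sdd-size}, highlighting only the verification of (C) and the trivial adjustment to the size bound.
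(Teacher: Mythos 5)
Your proposal is correct and follows essentially the same construction as the paper: same primes, same literal subs, same vtree shape with the $x$-variables on the left and the $y$-variables on the right, and the same verification of (S1)--(S5), (C), and the cubic size count. The only (immaterial) difference is that you allow an arbitrary vtree for $\{y_0,\ldots,y_n\}$ where the paper fixes a right-linear one; since the subs are single literals, either choice works.
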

\begin{proof}
We construct an NNF $C$ on input variables $X=\{x_1,\ldots,x_n,y_0,y_1,\ldots,y_n\}$ 
computing (\ref{eq:Fn}) along the lines of Theorem~\ref{th:sdd-size}.  The only modification 
is that $C_{s_0}$ is an input gate labelled $\neg y_0$, 
$C_{s_n}$ is an input gate labelled $y_n$, 
$C_{s_{i,0}}$ is an input gate labelled $\neg y_i$, 
and $C_{s_{i,1}}$ is an input gate labelled $y_i$ ($i=1,\ldots,n-1$).

We claim that $C$ is a compressed SDD respecting 
a vtree $T$ for the variable set $X$ built exactly as in Theorem~\ref{th:sdd-size} 
except that the right subtree $T_r=T_{v_r}$ of $T$ 
is a right-linear vtree for $\{y_0,y_1,\ldots,y_n\}$ such that 
the variable ordering induced by its left first traversal is $\rho$.  
See Figure~\ref{fig:vt-sdd} for the case $n=4$, 
$\sigma=x_1<\cdots<x_4$, and $\rho=y_0<y_1<\cdots<y_4$.  

To check that $C$ is a compressed SDD respecting $T$, 
notice that the subcircuits $C_{p_0}$ and $C_{p_{i,j}}$ are OBDDs respecting $\sigma$, 
and hence compressed SDDs respecting $T_l$ ($i=1,\ldots,n$, $j=0,1$), 
and the subcircuits $C_{s_0}$ and $C_{s_{i,j}}$ are OBDDs respecting $\rho$, 
and hence compressed SDDs respecting $T_r$ ($i=1,\ldots,n$, $j=0,1$).  
Moreover, it is easily verified that the output gate of $C$ is compressed as by condition (C).  
Hence $C$ is compressed.  The rest of the proof is identical to that of Theorem~\ref{th:sdd-size}.
\end{proof}

\begin{figure}[t!]
\centering
\begin{picture}(0,0)%
\includegraphics{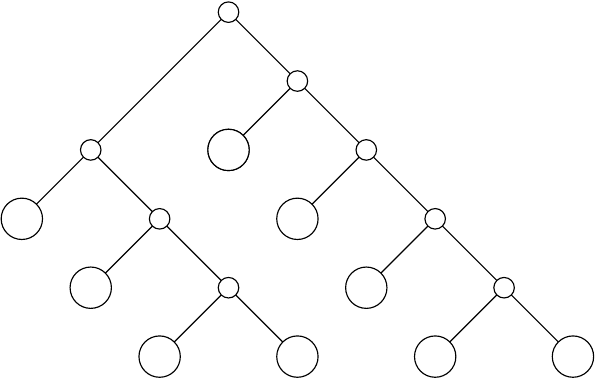}%
\end{picture}%
\setlength{\unitlength}{2901sp}%
\begingroup\makeatletter\ifx\SetFigFont\undefined%
\gdef\SetFigFont#1#2#3#4#5{%
  \reset@font\fontsize{#1}{#2pt}%
  \fontfamily{#3}\fontseries{#4}\fontshape{#5}%
  \selectfont}%
\fi\endgroup%
\begin{picture}(3886,2468)(10883,-429)
\put(11830,-331){\makebox(0,0)[lb]{\smash{{\SetFigFont{8}{9.6}{\rmdefault}{\mddefault}{\updefault}{\color[rgb]{0,0,0}$x_3$}%
}}}}
\put(10930,569){\makebox(0,0)[lb]{\smash{{\SetFigFont{8}{9.6}{\rmdefault}{\mddefault}{\updefault}{\color[rgb]{0,0,0}$x_1$}%
}}}}
\put(11380,119){\makebox(0,0)[lb]{\smash{{\SetFigFont{8}{9.6}{\rmdefault}{\mddefault}{\updefault}{\color[rgb]{0,0,0}$x_2$}%
}}}}
\put(12730,-331){\makebox(0,0)[lb]{\smash{{\SetFigFont{8}{9.6}{\rmdefault}{\mddefault}{\updefault}{\color[rgb]{0,0,0}$x_4$}%
}}}}
\put(12747,580){\makebox(0,0)[lb]{\smash{{\SetFigFont{8}{9.6}{\rmdefault}{\mddefault}{\updefault}{\color[rgb]{0,0,0}$y_1$}%
}}}}
\put(13197,130){\makebox(0,0)[lb]{\smash{{\SetFigFont{8}{9.6}{\rmdefault}{\mddefault}{\updefault}{\color[rgb]{0,0,0}$y_2$}%
}}}}
\put(13647,-320){\makebox(0,0)[lb]{\smash{{\SetFigFont{8}{9.6}{\rmdefault}{\mddefault}{\updefault}{\color[rgb]{0,0,0}$y_3$}%
}}}}
\put(14547,-320){\makebox(0,0)[lb]{\smash{{\SetFigFont{8}{9.6}{\rmdefault}{\mddefault}{\updefault}{\color[rgb]{0,0,0}$y_4$}%
}}}}
\put(12297,1030){\makebox(0,0)[lb]{\smash{{\SetFigFont{8}{9.6}{\rmdefault}{\mddefault}{\updefault}{\color[rgb]{0,0,0}$y_0$}%
}}}}
\end{picture}%

\caption{The vtree for $F_4$ in the proof of Lemma~\ref{lemma:sdd-size}.}
\label{fig:vt-sdd}
\end{figure}

We now prove that the generalized hidden weighted bit function $F_n$ needs large OBDDs.

\begin{lemma}\label{lemma:obdd-size}
The $\mathcal{OBDD}$ size of $F_n$ is $2^{\Omega(n)}$.
\end{lemma}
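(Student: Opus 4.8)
The plan is to reduce the lower bound for $F_n$ to Bryant's lower bound for $\mathrm{HWB}_n$ (Theorem~\ref{th:bryant}) by means of a restriction argument. The key observation is that $F_n$ contains $\mathrm{HWB}_n$ as a subfunction in a very concrete sense: if we substitute the literal assignment $y_0 = 0$, $y_n = 1$, and $y_i = 1$ for $i = 1,\ldots,n-1$ into the defining form (\ref{eq:Fn}), then every conjunct $P_0 \wedge \neg y_0$, $P_n \wedge y_n$, $P_{i,0} \wedge \neg y_i$ becomes $\bot$, while $P_n \wedge y_n$ and $P_{i,1} \wedge y_i$ collapse to $P_n$ and $P_{i,1}$ respectively; the resulting function of $x_1,\ldots,x_n$ is exactly $P_{1,1} \vee \cdots \vee P_{n-1,1} \vee P_n$, which by the discussion preceding Theorem~\ref{th:sdd-size} equals $(E^1_n \wedge x_1) \vee \cdots \vee (E^n_n \wedge x_n) \equiv \mathrm{HWB}_n$. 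In other words, $F_n$ restricted by this partial assignment $\alpha$ to the $y$-variables equals $\mathrm{HWB}_n(x_1,\ldots,x_n)$.

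First I would record the standard fact that OBDD size does not increase under restriction: if $\pi$ is any variable ordering of $\{x_1,\ldots,x_n,y_0,\ldots,y_n\}$ and $\alpha$ is a partial assignment, then the ordering $\pi$ induces an ordering $\pi'$ on the remaining variables, and $\mathcal{OBDD}_{\pi'}(F_n\!\restriction\!\alpha) \le \mathcal{OBDD}_{\pi}(F_n)$ — one simply takes the OBDD for $F_n$ under $\pi$ and follows, at each node labelled by a fixed variable, the outgoing edge dictated by $\alpha$ (this is the subfunction bound, e.g.\ along the lines of \cite{Wegener00}). Applying this with the assignment $\alpha$ above gives $\mathcal{OBDD}_{\pi'}(\mathrm{HWB}_n) \le \mathcal{OBDD}_{\pi}(F_n)$ for every ordering $\pi$. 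Since $\pi'$ is an ordering of $\{x_1,\ldots,x_n\}$ and Theorem~\ref{th:bryant} asserts that $\mathrm{HWB}_n$ has $\mathcal{OBDD}$ size $2^{\Omega(n)}$ under \emph{every} ordering of its variables, we conclude $\mathcal{OBDD}_{\pi}(F_n) \ge 2^{\Omega(n)}$ for every $\pi$, hence $\mathcal{OBDD}(F_n) = 2^{\Omega(n)}$.

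The only point requiring care is verifying that the restriction $F_n\!\restriction\!\alpha$ is genuinely $\mathrm{HWB}_n$ and not something weaker; this amounts to substituting into (\ref{eq:Fn}) and simplifying, using the identities $P_{i,1} \equiv E^i_n \wedge x_i$ and $P_n \equiv E^n_n$ together with the elementary fact that $E^n_n \equiv E^n_n \wedge x_n$ (a weight-$n$ assignment sets every bit, in particular $x_n$). I expect this algebraic check to be the main — though routine — obstacle, and I would also spell out the subfunction inequality carefully, since a reader might worry that the OBDD for $F_n$ ``wastes'' nodes on the $y$-variables; the point is that deleting those nodes and redirecting edges only shrinks the diagram while correctly computing the restricted function under the induced suborder. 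Everything else is immediate from Theorem~\ref{th:bryant}.
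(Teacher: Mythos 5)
Your overall strategy is exactly the paper's: restrict the $y$-variables so that $F_n$ collapses to $\mathrm{HWB}_n$, invoke the fact that restriction does not increase $\mathcal{OBDD}_\rho$ size (with the induced suborder), and finish with Theorem~\ref{th:bryant}. The subfunction inequality and the quantification over all orderings are handled correctly. However, there is a concrete error in the one step you yourself flag as the point requiring care: your assignment sets $y_0=0$, but the sub attached to $P_0$ in (\ref{eq:Fn}) is the \emph{negative} literal $\neg y_0$, so under $y_0=0$ the conjunct $P_0\wedge\neg y_0$ simplifies to $P_0$, not to $\bot$. (Your own list is internally inconsistent here: it asserts both that $P_n\wedge y_n$ becomes $\bot$ and that it collapses to $P_n$.) Consequently $F_n\!\restriction\!\alpha \equiv P_0 \vee P_n \vee \bigvee_{i=1}^{n-1}P_{i,1} \equiv \mathrm{HWB}_n \vee E^0_n$, which differs from $\mathrm{HWB}_n$ on the all-zeros input (not a model of $\mathrm{HWB}_n$ under the paper's definition, since no $i\ge 1$ has weight $i=0$). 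So the identity ``$F_n\!\restriction\!\alpha = \mathrm{HWB}_n$'' on which your reduction rests is false as stated.

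The fix is immediate: set $y_0=1$ (the paper simply sets \emph{all} $y$-variables to $1$), which kills $P_0\wedge\neg y_0$ and every $P_{i,0}\wedge\neg y_i$ and leaves exactly $P_n\vee\bigvee_{i=1}^{n-1}P_{i,1}\equiv\mathrm{HWB}_n$. Alternatively, you could salvage your assignment by observing that $\mathrm{HWB}_n=(\mathrm{HWB}_n\vee E^0_n)\wedge\neg E^0_n$ and $\neg E^0_n$ has linear OBDD size under every ordering, so by the apply bound the one-point discrepancy costs only an $O(n)$ factor and the $2^{\Omega(n)}$ bound survives; but that extra step is absent from your write-up, which instead asserts the exact equality. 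Either repair makes the proof correct and essentially identical to the paper's.
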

\begin{proof}
Let $N$ be the size of a smallest $\mathcal{OBDD}$ on variables 
$X=\{x_1,\ldots,x_n,y_0,y_1,\ldots,y_n\}$ computing $F_n$, 
and let $\rho$ be any ordering of $X$ 
such that $\mathcal{OBDD}_\rho(F_n)=N$.  

Let $G_n(x_1,\ldots,x_n)$ be the subfunction of $F_n$ where 
$y_0,y_1,\ldots,y_n$ are replaced by $1$, in symbols:
\begin{equation}\label{eq:Gn}
G_n \equiv {F_n}(x_1,\ldots,x_n,1,1,\ldots,1)\text{.} 
\end{equation}

Since conditioning (unboundedly many variables of) an OBDD 
does not increase its size \cite[Theorem~2.4.1]{Wegener00}, we have that 
\begin{equation}\label{eq:step-1}
\mathcal{OBDD}_{\rho}(G_n) \leq \mathcal{OBDD}_{\rho}(F_n)\text{.}
\end{equation}

We now claim that $G_n$ is the hidden weighted bit function on $n$ variables. 
Indeed, by construction, 
\begin{align*}
G_n &\equiv  F_n(x_1,\ldots,x_n,1,1,\ldots,1) \\
 & \equiv P_n \vee \bigvee_{i=1}^{n-1} P_{i,1}
\end{align*}
which we already observed being equivalent to $\mathrm{HWB}_n$.  Therefore $\mathcal{OBDD}(G_n)=2^{\Omega(n)}$ by Theorem~\ref{th:bryant}, 
and in particular $\mathcal{OBDD}_\rho(G_n) \geq 2^{\Omega(n)}$.  By (\ref{eq:step-1}), 
we are done.
\end{proof}

An exponential separation of OBDDs and compressed SDDs follows.

\begin{theorem}\label{th:exp-sep}
There exists an unbounded arity class of boolean functions $\mathcal{F}$ 
such that every arity $n$ function $f \in \mathcal{F}$ has $\mathcal{SDD}_c$ size in $O(n^3)$ 
and $\mathcal{OBDD}$ size in $2^{\Omega(n)}$.
\end{theorem}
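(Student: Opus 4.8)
The plan is to simply assemble the pieces already established. Take $\mathcal{F}=\{F_n : n \geq 1\}$, where $F_n$ is the generalized hidden weighted bit function on the $2n+1$ variables $x_1,\ldots,x_n,y_0,y_1,\ldots,y_n$ defined by~(\ref{eq:Fn}). This is an unbounded arity class of boolean functions, so it remains only to verify the two size bounds for each member.

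For the upper bound, I would invoke Lemma~\ref{lemma:sdd-size}, which gives $\mathcal{SDD}_c(F_n) = O(n^3)$; since the arity of $F_n$ is $2n+1 = \Theta(n)$, this is cubic in the arity as well. For the lower bound, I would invoke Lemma~\ref{lemma:obdd-size}, which gives $\mathcal{OBDD}(F_n) = 2^{\Omega(n)}$, again matching $2^{\Omega(\text{arity})}$. Combining the two, every member of $\mathcal{F}$ has polynomial compressed SDD size and exponential OBDD size, which is exactly the claimed separation.

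There is essentially no obstacle here: the theorem is a one-line corollary of Lemma~\ref{lemma:sdd-size} and Lemma~\ref{lemma:obdd-size}, which in turn rest on Fact~\ref{fact:part}, Proposition~\ref{prop:obdd-pi}, and Bryant's Theorem~\ref{th:bryant}. The only point deserving a word of care is the bookkeeping on arity — noting that $F_n$ has $2n+1$ variables so that "$O(n^3)$" and "$2^{\Omega(n)}$" are genuinely polynomial and exponential \emph{in the arity}, not merely in the index $n$ — but this is immediate. One could also phrase the statement with $\mathcal{F}$ reindexed by arity to make this transparent. I would therefore write the proof as: let $\mathcal{F}$ consist of the functions $F_n$; by Lemma~\ref{lemma:sdd-size} each has $\mathcal{SDD}_c$ size $O(n^3)$, and by Lemma~\ref{lemma:obdd-size} each has $\mathcal{OBDD}$ size $2^{\Omega(n)}$; this completes the proof.
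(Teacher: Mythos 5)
Your proposal is correct and follows exactly the paper's own proof: take $\mathcal{F}=\{F_n\}$, apply Lemma~\ref{lemma:sdd-size} for the $O(n^3)$ compressed SDD upper bound and Lemma~\ref{lemma:obdd-size} for the $2^{\Omega(n)}$ OBDD lower bound, and observe that the arity $2n+1$ is $\Theta(n)$ so both bounds transfer to the arity. The paper makes the same bookkeeping remark about $F_m$ having $n=2m+1$ variables, so there is nothing to add.
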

\begin{proof}
Take $\mathcal{F}=\{ F_m \colon m \in \mathbb{N} \}$, 
where $F_m$ is as in (\ref{eq:Fn}).  Then $F_m$ has 
compressed $\mathcal{SDD}$ size $O(m^3)$ by Lemma~\ref{lemma:sdd-size} 
and $\mathcal{OBDD}$ size $2^{\Omega(m)}$ by Lemma~\ref{lemma:obdd-size}.  
Since $F_m$ has $n=2m+1$ 
variables, it follows that $F_m$ has $\mathcal{SDD}_c$ size in $O(n^3)$ 
and $\mathcal{OBDD}$ size in $2^{\Omega(n)}$.
\end{proof}

Notably, the function class giving the exponential separation 
is as hard on compressed SDDs as the function class giving the quasipolynomial separation 
(cubic in both cases, see Theorem~\ref{th:raz}).

\section{Discussion}\label{sect:disc}

We have shown that OBDDs and SDDs 
are exponentially separated by 
the hidden weighted bit function, 
while OBDDs and compressed SDDs 
are exponentially separated by the generalized hidden weighted bit function, 
$F_n$ in (\ref{eq:Fn}), that contains the hidden weighted bit function as a subfunction:
\begin{equation}\label{eq:Fn-proj}
F_n(x_1,\ldots,x_n,1,1,\ldots,1)=\mathrm{HWB}_n(x_1,\ldots,x_n)\text{.}
\end{equation}
Separating OBDDs and SDDs by the hid\-den weight\-ed bit function, instead of by a function designed adhoc,  
further corroborates the theoretical quality of SDDs.  As articulated by Bollig et al.\ \cite{BolligLSW99}, 
any useful extension of OBDDs is expected to implement the hidden weighted bit function efficiently.  

The SDD $C$ described in the proof of Theorem~\ref{th:sdd-size} is not compressed, 
because $\bot$ and $\top$ are reused $n$ times.  
In view of the canonical construction of an SDD over a vtree \cite[Theorem~3]{Darwiche11}, it is readily observed that 
compressing $C$ with respect to the vtree $T$ in the proof of Theorem~\ref{th:sdd-size} implies finding 
a small SDD for $\mathrm{HWB}_n$ with respect to the left subtree of $T$, 
that is, a small OBDD for $\mathrm{HWB}_n$; but this is impossible by Theorem~\ref{th:bryant}.  The fact 
that compressing an SDD over its vtree 
may increase the size exponentially has been observed already \cite[Theorem~1]{BroeckDarwiche15}.  
We reiterate the observation here only because our argument is significantly shorter.

We conclude mentioning a nonobvious, and perhaps even unexpected, 
aspect of our separation result.  An inspection of our construction 
shows that SDDs are already exponentially more succinct than \emph{general OBDDs} 
even allowing only \emph{one sentential decision} (and possibly many Shannon decisions); recall (\ref{eq:sdd-hwb-form}) and (\ref{eq:Fn}).  
The construction by Xue et al.\ \cite{XueCD12} already uses \emph{nested sentential decisions} 
even to separate \emph{OBDDs over a fixed variable ordering} from SDDs!

\paragraph{Questions.} We do not know whether the hidden weighted bit function has superpolynomial compressed 
SDD size for all vtrees; a positive answer would separate compressed and uncompressed SDDs  
in succinctness and, in view of Lemma~\ref{lemma:sdd-size} and (\ref{eq:Fn-proj}), would prove that 
compressed SDDs do not support conditioning (of unboundedly many variables) in polynomial size.

In view of Theorem~\ref{th:raz}, it is natural to ask which SDDs are quasipolynomially simulated by OBDDs.  
Our separating family shows that SDDs with unbounded fanin disjunctions cannot be quasipolynomially simulated by OBDDs.  
On the other hand, recent work by Darwiche and Oztok essentially shows that 
SDDs over binary disjunctions (fanin $2$) admit a quasipolynomial simulation by OBDDs \cite[Theorem~1]{OztokDarwiche15}.  
In this light, it is tempting to conjecture that the above criterion is exact, that is, every SDD class over bounded fanin disjunctions 
does indeed admit a quasipolynomial simulation by OBDDs.

Finally, a natural question arising in the context of the present work 
is about the relative succinctness of SDDs and structured deterministic NNFs (see (\ref{eq:inclusions})); 
to the best of our knowledge, the question is open. By Theorem~\ref{th:sdd-size}, 
at least we now know that the hidden weighted bit function is not a candidate to separate the two classes.

\section*{Acknowledgments}

The author thanks Igor Razgon for generously introducing 
him to the problem addressed in this article, and an 
anonymous reviewer for suggesting the comparison 
with \cite{XueCD12} discussed in the conclusion.  This research was supported by the FWF Austrian Science Fund 
(Parameterized Compilation, P26200).

\DeclareRobustCommand{\DE}[3]{#3}


\begin{thebibliography}{10}

\bibitem{BolligLSW99}
Beate Bollig, Martin L{\"{o}}bbing, Martin Sauerhoff, and Ingo Wegener.
\newblock {On the Complexity of the Hidden Weighted Bit Function for Various
  {BDD} Models}.
\newblock {\em {Theoretical Informatics and Applications}} 33(2):103--116, 1999.

\bibitem{BroeckDarwiche15}
Guy {\DE{Broek}{Van den}{van den}}~Broek and Adnan Darwiche.
\newblock {On the Role of Canonicity in Knowledge Compilation}.
\newblock In Bonet, B., and Koenig, S., eds., {\em Proceedings of the
  Twenty-Ninth {AAAI} Conference on Artificial Intelligence, January 25-30,
  2015, Austin, Texas, {USA.}},  1641--1648. {AAAI} Press, 2015.

\bibitem{Bryant86}
Randal~E. Bryant.
\newblock {Graph-Based Algorithms for Boolean Function Manipulation}.
\newblock {\em {IEEE} Transactions on Computers} 35(8):677--691, 1986.

\bibitem{ChoiD13}
Arthur Choi and Adnan Darwiche.
\newblock {Dynamic Minimization of Sentential Decision Diagrams}.
\newblock In desJardins, M., and Littman, M.~L., eds., {\em Proceedings of the
  Twenty-Seventh {AAAI} Conference on Artificial Intelligence, July 14-18,
  2013, Bellevue, Washington, {USA.}}, 187--194.  {AAAI} Press, 2013.

\bibitem{DarwicheM02}
Adnan Darwiche and Pierre Marquis.
\newblock {A Knowledge Compilation Map}.
\newblock {\em {Journal of Artificial Intelligence Research}} 17:229--264, 2002.
  
\bibitem{Darwiche01}
Adnan Darwiche.
\newblock {Decomposable Negation Normal Form}.
\newblock {\em {Journal of the ACM}} 48(4):608--647, 2001.

\bibitem{Darwiche11}
Adnan Darwiche.
\newblock {SDD: A New Canonical Representation of Propositional Knowledge
  Bases}.
\newblock In Walsh, T., ed., {\em {IJCAI} 2011, Proceedings of the
  Twenty-Second International Joint Conference on Artificial Intelligence,
  Barcelona, Catalonia, Spain, July 16-22, 2011},  819--826. {IJCAI/AAAI}, 2011.

\bibitem{OztokD14}
Umut Oztok and Adnan Darwiche.
\newblock {CV-Width: A New Complexity Parameter for CNFs}.
\newblock In Schaub, T.; Friedrich, G.; and O'Sullivan, B., eds., {\em {ECAI}
  2014, Proceedings of the Twenty-First European Conference on Artificial
  Intelligence, 18-22 August 2014, Prague, Czech Republic, August 18-22, 2014},
  volume 263 of {\em Frontiers in Artificial Intelligence and Applications},
  675--680. {IOS} Press, 2014.

\bibitem{OztokDarwiche15}
Umut Oztok and Adnan Darwiche.
\newblock {A Top-Down Compiler for Sentential Decision Diagrams}.
\newblock In Yang, Q., and Wooldridge, M., eds., {\em Proceedings of the
  Twenty-Fourth International Joint Conference on Artificial Intelligence,
  {IJCAI} 2015, Buenos Aires, Argentina, July 25-31, 2015},  3141--3148.
  {AAAI} Press, 2015.

\bibitem{PipatsrisawatD08}
Knot Pipatsrisawat and Adnan Darwiche.
\newblock {New Compilation Languages Based on Structured Decomposability}.
\newblock In Fox, D., and Gomes, C.~P., eds., {\em Proceedings of the
  Twenty-Third {AAAI} Conference on Artificial Intelligence, {AAAI} 2008,
  Chicago, Illinois, USA, July 13-17, 2008},  517--522. {AAAI} Press,
  2008.

\bibitem{PipatsrisawatD10}
Thammanit Pipatsrisawat and Adnan Darwiche.
\newblock {A Lower Bound on the Size of Decomposable Negation Normal Form}.
\newblock In Fox, M., and Poole, D., eds., {\em Proceedings of the
  Twenty-Fourth {AAAI} Conference on Artificial Intelligence, {AAAI} 2010,
  Atlanta, Georgia, USA, July 11-15, 2010}. {AAAI} Press, 2010.

\bibitem{RazgonP13}
Igor Razgon and Justyna Petke.
\newblock {Cliquewidth and Knowledge Compilation}.
\newblock In J{\"{a}}rvisalo, M., and Gelder, A.~V., eds., {\em SAT 2015,
  Proceedings of the Sixteenth International Conference on Theory and
  Applications of Satisfiability Testing, Helsinki, Finland, July 8-12, 2013},
  volume 7962 of {\em Lecture Notes in Computer Science},  335--350. Springer, 2013.  
  
\bibitem{Razgon13}
Igor Razgon.
\newblock {On {OBDDs} for {CNFs} of Bounded Treewidth}.
\newblock {\em CoRR} abs/1308.3829v3, 2014.

\bibitem{Razgon14}
Igor Razgon.
\newblock {On {OBDDs} for {CNFs} of Bounded Treewidth}.
\newblock In Baral, C.; Giacomo, G.~D.; and Eiter, T., eds., {\em KR 2014,
  Proceedings of the Fourteenth International Conference on Principles of
  Knowledge Representation and Reasoning, Vienna, Austria, July 20-24, 2014}. {AAAI} Press, 2014.


\bibitem{Wegener00}
Ingo Wegener.
\newblock {\em {Branching Programs and Binary Decision Diagrams}}.
\newblock {SIAM}, 2000.

\bibitem{XueCD12}
Yexiang Xue, Arthur Choi, and Adnan Darwiche.
\newblock In Hoffmann, J., and Selman, B., eds., {\em Proceedings of the
  Twenty-Sixth {AAAI} Conference on Artificial Intelligence, July 22-26, 2012,
  Toronto, Ontario, Canada.} {AAAI} Press, 2012.

\end{thebibliography}
\end{document}